\documentclass[12pt,a4paper]{article}
\usepackage[english]{babel}
\usepackage[top=1in, bottom=1.25in, left=1.25in, right=1.25in]{geometry}
\usepackage[hidelinks]{hyperref}
\usepackage{amsmath}
\usepackage{amsthm}
\usepackage{amssymb}
\usepackage[ruled,vlined,linesnumbered]{algorithm2e}
\usepackage{tikz}

\newcommand{\ttt}[1]{\texttt{#1}}
\newcommand{\pp}[2]{p_{#1}[#2]}
\newcommand{\CH}{\mathcal{H}}
\newcommand{\NULL}{\ttt{NULL}}
\newcommand{\RHP}{\textrm{RHP}}
\newcommand{\LHP}{\textrm{LHP}}
\newcommand{\LLL}{\mathcal L}
\newcommand{\TTT}{\mathcal T}
\newcommand{\comm}[1]{}

\newtheorem{theorem}{Theorem}
\newtheorem{observation}[theorem]{Observation}
\newtheorem{lemma}[theorem]{Lemma}

\title{An Optimal Algorithm for the Separating Common Tangents of two
Polygons\footnote{A preliminary version of this paper appeared at
SoCG 2015 \cite{abrahamsen}.
In the case where the convex hulls of the polygons are not disjoint,
it is not clear that the algorithm for separating common tangents terminates
within the given bound on the running time.
Here, we give a correct algorithm and simplify the proof of correctness slightly.}}

\author{Mikkel Abrahamsen\thanks{Research partly supported by Mikkel Thorup's
    Advanced Grant from the Danish Council for Independent Research
    under the Sapere Aude research career programme.}\\
    Department of Computer Science, University of Copenhagen\\
  Universitetsparken 5\\
  DK-2100 Copanhagen \O\\
  Denmark\\
  \texttt{miab@di.ku.dk}
   }

\newtheorem*{theorem*}{Theorem}

\begin{document}

\maketitle

\begin{abstract}
We describe an algorithm for computing the separating common tangents of two simple polygons
using linear time and only constant workspace.
A tangent of a polygon is a line touching the polygon such that all of the polygon lies
to the same side of the line. A separating common tangent of two polygons is a tangent
of both polygons where the polygons are lying on different sides of the tangent.
Each polygon is given as a read-only array of its corners.
If a separating common tangent does not exist, the algorithm reports that.
Otherwise, two corners defining a separating common tangent are returned.
The algorithm is simple and implies an optimal
algorithm for deciding if the convex hulls of two polygons are disjoint or not.
This was not known to be possible in linear time and constant workspace
prior to this paper.

An outer common tangent is a tangent of both polygons where the polygons are
on the same side of the tangent.
In the case where the convex hulls of the polygons are disjoint, we give an algorithm for computing
the outer common tangents in linear time using constant workspace.
\end{abstract}

\section{Introduction}

The problem of computing common tangents of two given polygons has received some attention
in the case where the polygons are convex. For instance, it is necessary to compute outer common
tangents of disjoint convex polygons in the classic divide-and-conquer algorithm
for the convex hull of a
set of $n$ points in the plane by Preparata and Hong \cite{preparata1977}.
They give a naïve linear time algorithm
for outer common tangents since that suffices for an $O(n\log n)$ time convex hull algorithm.
The problem is also considered in various dynamic convex hull algorithms
\cite{brodal2002, hershberger1992, overmars1981}.
Overmars and van Leeuwen \cite{overmars1981} give an $O(\log n)$ time algorithm for computing
an outer common tangent of two disjoint convex polygons when a separating line is known, where
each polygon has at most $n$ corners.
Kirkpatrick and Snoeyink \cite{kirkpatrick19952} give an $O(\log n)$ time algorithm for the same problem, but without using a separating line.
Guibas et al.~\cite{guibas1991} give an $\Omega(\log^2 n)$ lower bound on the time required
to compute an outer common tangent of two intersecting convex polygons,
even if it is known that they intersect in at most two points.
They also describe an algorithm achieving that bound.

Touissaint \cite{toussaint1983} considers the problem of computing separating common
tangents of convex polygons and notes that the problem occurs in problems related to
visibility, collision avoidance, range fitting, etc. He gives a linear time algorithm. 
Guibas et al.~\cite{guibas1991} give an $O(\log n)$ time algorithm for the same problem.

All the here mentioned works make use of the convexity of the polygons.
If the polygons are not convex, one can use a linear time algorithm to compute
the convex hulls before computing the tangents \cite{graham1983, melkman1987}.
However, if the polygons are given in read-only memory,
it requires $\Omega(n)$ extra bits to store the convex hulls.
In this paper, we also
obtain linear time while using only constant workspace, i.e.~$O(\log n)$ bits.
For the outer common tangents, we require the convex hulls of the polygons
to be disjoint.
There has been some recent interest in constant workspace algorithms for
geometric problems, see for instance \cite{abrahamsen2013, asano1, asano2, barba2}.

The problem of computing separating common tangents is of special interest because
these only exist when the convex hulls of the polygons are disjoint, and our
algorithm detects if they are not. Thus, we also provide an optimal algorithm for deciding
if the convex hulls of two polygons are disjoint or not. This was to the best of our knowledge
not known to be possible in linear time and constant workspace prior to our work.

\subsection{Notation and some basic definitions}

Given two points $a$ and $b$ in the plane, the closed line segment with endpoints $a$ and $b$
is written $ab$.
When $a\neq b$, the line containing $a$ and $b$ which is infinite in both directions
is written $\LLL(a,b)$.

Define the dot product of two points $x=(x_0,x_1)$ and $y=(y_0,y_1)$ as
$x\cdot y=x_0y_0+x_1y_1$, and let
$x^{\perp}=(-x_1,x_0)$ be the counterclockwise rotation of $x$ by the angle $\pi/2$.
Now, for three points $a$, $b$, and $c$, we define
$\TTT(a,b,c)=\textrm{sgn}((b-a)^{\perp}\cdot(c-b))$,
where $\textrm{sgn}$ is the sign function.
$\TTT(a,b,c)$ is $1$ if $c$ is to the left of the directed line from $a$ to $b$,
$0$ if $a$, $b$, and $c$ are collinear, and $-1$ if $c$ is to the right of the directed line
from $a$ to $b$. We see that
$$\TTT(a,b,c)=\TTT(b,c,a)=\TTT(c,a,b)=-\TTT(c,b,a)=-\TTT(b,a,c)=-\TTT(a,c,b).$$
We also note that if $a'$ and $b'$ are on the
line $\LLL(a,b)$ and appear in the same order as $a$ and $b$, i.e., $(b-a)\cdot (b'-a') > 0$,
then $\TTT(a,b,c)=\TTT(a',b',c)$ for every point $c$.

The \emph{left half-plane} $\LHP (a,b)$ is the closed half plane with boundary $\LLL(a,b)$
lying to the left of directed line from $a$ to $b$, i.e., all the points $c$ such that
$\TTT(a,b,c)\geq 0$.
The \emph{right half-plane} $\RHP(a,b)$ is just $\LHP(b,a)$.

Assume for the rest of this paper that
$P_0$ and $P_1$ are two simple polygons in the plane with $n_0$ and $n_1$ corners, respectively,
where $P_k$ is defined
by its corners $\pp k0,\pp k1,\ldots,\pp k{n_k-1}$ in clockwise or counterclockwise order,
$k=0,1$.
Indices of the corners are considered modulo $n_k$, so that $\pp ki$ and
$\pp kj$ are the same corner when $i\equiv j\pmod {n_k}$.

We assume that the corners are in general position in the sense
that $P_0$ and $P_1$ have no common corners and
the union of corners
$\bigcup_{k=0,1}\{\pp k0,\ldots,\pp k{n_k-1}\}$ contains no three collinear corners.

A \emph{tangent} of $P_k$ is a line $\ell$ such that $\ell$ and $P_k$ are not disjoint and such that
$P_k$ is contained in one of the closed half-planes defined by $\ell$. The line $\ell$ is a \emph{common tangent} of $P_0$ and $P_1$ if it is a tangent of both $P_0$ and $P_1$.
A common tangent is an \emph{outer} common tangent if $P_0$ and $P_1$ are on the same side of
the tangent, and otherwise the tangent is \emph{separating}. See Figure \ref{commonTangentsEx}.

\begin{figure}
\centering
\begin{tikzpicture}[scale=0.2]

\input{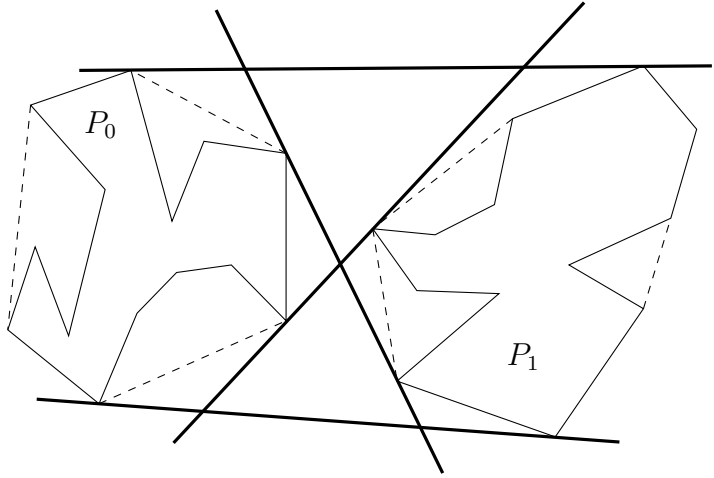}

\draw [dashed] (Z) -- (C);
\draw [dashed] (D) -- (H);
\draw [dashed] (I) -- (Y);

\draw [dashed] (W) -- (M);
\draw [dashed] (O) -- (Q);
\draw [dashed] (T) -- (W);

\draw [very thick] (5.8, 41.0) -- (47.4, 41.3);
\draw [very thick] (3.0, 19.2) -- (41.32, 16.36);
\draw [very thick] (14.8, 45.0) -- (29.7, 14.3);
\draw [very thick] (12.0, 16.3) -- (39.0, 45.5);




\end{tikzpicture}
\caption{Two polygons $P_0$ and $P_1$
and their four common tangents as thick lines. The edges of the convex
hulls which are not edges of $P_0$ or $P_1$ are dashed.}
\label{commonTangentsEx}
\end{figure}

For a simple polygon $P$, we let $\CH(P)$ be the convex hull of $P$.
The following lemma is a well-known fact about $\CH(P)$.

\begin{lemma}
For a simple polygon $P$, $\CH(P)$ is a convex polygon
and the corners of $\CH(P)$ appear in the same
cyclic order as they do on $P$.
\end{lemma}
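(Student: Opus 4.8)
The plan is to treat the two assertions in turn. The first is standard: every edge of $P$ lies in the convex hull of its two endpoints, so $\CH(P)$ coincides with the convex hull of the finite set of corners of $P$, and the convex hull of a finite planar point set is a convex polygon whose vertices form a subset of the set; by the general position hypothesis no corner lies in the relative interior of an edge of $\CH(P)$. Write $u_0,u_1,\dots,u_{m-1}$ for the vertices of $\CH(P)$ in counterclockwise order around $\partial\CH(P)$, where $m\ge 3$, and assume without loss of generality that the corners of $P$ are numbered counterclockwise as well (the clockwise case is symmetric). It remains to show that $u_0,u_1,\dots,u_{m-1}$ is the cyclic order in which these corners are met when $P$ is traversed in that direction. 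Let $\Omega$ denote the bounded component of the complement of $P$.

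The heart of the proof is the claim that for every $t$ there is an arc $\kappa_t$ of $P$ with endpoints $u_t$ and $u_{t+1}$ whose relative interior contains no vertex of $\CH(P)$. If the segment $e_t=u_tu_{t+1}$ is an edge of $P$, take $\kappa_t=e_t$. Otherwise $e_t$ is a \emph{bridge}, and I first observe that $P\cap\LLL(u_t,u_{t+1})=\{u_t,u_{t+1}\}$: since $\CH(P)\subseteq\LHP(u_t,u_{t+1})$, every point of $P$ on the line $\LLL(u_t,u_{t+1})$ lies on the edge $e_t$ of $\CH(P)$; by general position it cannot be a corner in the relative interior of $e_t$; and if it were a non-corner point of $P$, the edge of $P$ through it would have to run along $\LLL(u_t,u_{t+1})$ (otherwise $P$ crosses the line and leaves $\LHP(u_t,u_{t+1})$), hence would coincide with $e_t$, contradicting that $e_t$ is a bridge. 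In particular the relative interior of $e_t$ is disjoint from $P$, and it lies in the unbounded component of the complement of $P$ (it can be joined to the exterior of $\CH(P)$, and thence to infinity, without meeting $P$). Now $\{u_t,u_{t+1}\}$ splits $P$ into two arcs; since the segment $e_t$ meets $P$ only at $u_t$ and $u_{t+1}$, the Jordan curve theorem shows that $e_t$ together with exactly one of these two arcs bounds a bounded region $D_t$ (a \emph{pocket}) contained in the unbounded component of the complement of $P$; let $\kappa_t$ be that arc. Since $\CH(P)$ is convex and contains the closed curve $e_t\cup\kappa_t$, it contains $D_t$, and it also contains $\Omega$. Consequently, for any point $x$ in the relative interior of $\kappa_t$, one side of $P$ near $x$ lies in $\Omega$ and the other in $D_t$, so a neighbourhood of $x$ is contained in $\CH(P)$; thus $x$ lies in the interior of $\CH(P)$ and is not a vertex of $\CH(P)$. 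This proves the claim.

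To finish, I would argue by counting. The vertices of $\CH(P)$ are met along $P$ in some cyclic order, and that order partitions $P$ into exactly $m$ \emph{elementary} arcs, each joining two hull vertices consecutive in the order and containing no hull vertex in its relative interior. By the claim, each $\kappa_t$ is an elementary arc, and since (for $m\ge 3$) the unordered pair $\{u_t,u_{t+1}\}$ of consecutive hull vertices determines $t$, the arcs $\kappa_0,\dots,\kappa_{m-1}$ are $m$ distinct elementary arcs and therefore are all of them. Regard the elementary arcs as the edge set of a graph on the vertices $u_0,\dots,u_{m-1}$: this edge set equals $\{u_tu_{t+1}:0\le t<m\}$, which is the $m$-cycle in the hull order, and by the definition of the elementary arcs it is also the $m$-cycle in the order in which the hull vertices occur along $P$. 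Since the vertex order of an $m$-cycle is determined up to reversal, and both orders are counterclockwise, they agree, which is the assertion.

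I expect the real obstacle to be the pocket construction in the second paragraph: making precise the intuitive statement that a bridge cuts off a pocket requires a careful application of the Jordan curve theorem (that the bridge splits the unbounded complementary region of $P$ into a bounded and an unbounded part) together with some routine but fiddly point-set topology to confirm that the relative interior of the pocket arc lies in the interior of $\CH(P)$. The reduction to this claim and the final counting argument are elementary by comparison.
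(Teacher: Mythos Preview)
The paper does not prove this lemma at all; it is introduced as ``a well-known fact about $\CH(P)$'' and left without proof. So there is no paper argument to compare your approach against.

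Your argument is sound and follows the standard ``pocket'' proof of this folklore fact. Two small remarks. First, you invoke the paper's general-position hypothesis (no three corners collinear) to conclude that no corner of $P$ lies in the relative interior of a hull edge $e_t$; strictly speaking the lemma is stated for an arbitrary simple polygon, but since the paper only ever applies it to $P_0$ and $P_1$ this is harmless, and the claim is true without that hypothesis anyway (one just has to allow $P\cap\LLL(u_t,u_{t+1})$ to be a union of segments rather than two points, which does not affect the pocket construction). Second, the final sentence ``both orders are counterclockwise, they agree'' deserves one more line: having shown that the two cyclic orders coincide up to reversal, you can pin down the orientation by noting that at any hull vertex $u_t$ the interior $\Omega$ of $P$ lies locally in the interior of $\CH(P)$, so the side of $P$ that is ``inside'' at $u_t$ matches the side of $\partial\CH(P)$ that is ``inside'', forcing the two counterclockwise orientations to agree. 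With that addition the argument is complete; the Jordan-curve subtleties you flag in your last paragraph are genuine but routine.
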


The following lemma states folklore properties of tangents of polygons.

\begin{lemma}\label{folklore}
A line is a tangent of a polygon $P$ if and only if it is a tangent of $\CH(P)$.
Under our general position assumptions, the following holds:
If one of $\CH(P_0)$ and $\CH(P_1)$ is completely contained
in the other, there are no outer common tangents of $P_0$ and $P_1$.
Otherwise, there are two or more. There are exactly two if $P_0$ and $P_1$ are disjoint.
If $\CH(P_0)$ and $\CH(P_1)$ are not disjoint, there are no separating common
tangents of $P_0$ and $P_1$. Otherwise, there are exactly two.
\end{lemma}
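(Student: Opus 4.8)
The plan is to first reduce to the case of two convex polygons. A closed half-plane containing $P$ contains $\CH(P)$, and a point where a tangent line $\ell$ of $P$ touches $P$ lies in $\CH(P)$; conversely, if $\ell$ is a tangent of $\CH(P)$ then $P\subseteq\CH(P)$ gives one half-plane containment, and $\ell\cap\CH(P)$ is a non-empty face of $\CH(P)$, hence contains a corner of $\CH(P)$, which is a corner of $P$ and therefore lies on $P$. This proves the first assertion, and shows that a common tangent of $P_0,P_1$ is precisely a common tangent of $Q_0:=\CH(P_0)$ and $Q_1:=\CH(P_1)$ of the same (outer or separating) type; since the corners of $Q_k$ form a subset of those of $P_k$, the general position hypotheses are inherited by $Q_0,Q_1$. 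So it suffices to prove the counting statements for the convex polygons $Q_0,Q_1$.

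I would record two consequences of general position. (A) Every common tangent $\ell$ of $Q_0,Q_1$ meets each $Q_k$ in a single corner of $Q_k$ (which is a corner of $P_k$): since $\ell$ supports $Q_k$, the intersection $\ell\cap Q_k$ is a vertex or an edge of $Q_k$, and were it an edge, $\ell$ would contain its two endpoints together with some corner of $Q_{1-k}$ — a non-empty face of $Q_{1-k}$ being cut by $\ell$ — giving three distinct collinear corners. (B) If $Q_0\cap Q_1\neq\emptyset$ then $\mathrm{int}\,Q_0\cap\mathrm{int}\,Q_1\neq\emptyset$: otherwise any $p\in Q_0\cap Q_1$ lies on $\partial Q_0\cap\partial Q_1$, and a case analysis on the minimal faces of $Q_0$ and $Q_1$ containing $p$ (vertex/vertex gives a common corner; vertex/edge gives three collinear corners; edge/edge gives either three collinear corners or two transversally crossing edges, the latter producing a common interior point near $p$) yields a contradiction. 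In particular the origin lies in the interior of the Minkowski sum $R:=Q_1\oplus(-Q_0)$ when $Q_0\cap Q_1\neq\emptyset$, and outside $R$ when $Q_0\cap Q_1=\emptyset$ (the origin belongs to $R$ iff $Q_0\cap Q_1\neq\emptyset$).

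For the outer common tangents I would work with $Q:=\CH(Q_0\cup Q_1)=\CH(P_0\cup P_1)$; each corner of $Q$ is a corner of exactly one $P_k$, and I colour the corners accordingly and call an edge of $Q$ a \emph{bridge} if its endpoints have different colours. By (A), an outer common tangent has $Q$ on one side, hence supports $Q$ and meets it in a face containing its two distinct touching corners, so in a bridge; conversely the line through a bridge is an outer common tangent, and distinct bridges give distinct lines since no three corners of $Q$ are collinear. Thus the number of outer common tangents equals the number of bridges, i.e.\ the number of colour changes around $\partial Q$, which is even; it is $0$ exactly when all corners of $Q$ share a colour, which (since then $Q$ equals that $Q_k$, hence contains the other) is exactly when one hull contains the other, and otherwise it is at least $2$. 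Finally, if $P_0$ and $P_1$ are disjoint there are at most two bridges: four or more bridges give, in cyclic order on $\partial Q$, arcs $A_0,B_0,A_1,B_1$ with $A_i\subseteq\partial Q_0$, $B_i\subseteq\partial Q_1$; picking a corner in each (arc endpoints being genuine corners of the relevant hull, again by no three collinear) yields points $x_0\in A_0$, $y_0\in B_0$, $x_1\in A_1$, $y_1\in B_1$ in interleaved cyclic order on the Jordan curve $\partial Q$ with $x_0,x_1\in P_0$ and $y_0,y_1\in P_1$, and then a sub-arc of $\partial P_0$ from $x_0$ to $x_1$ and a sub-arc of $\partial P_1$ from $y_0$ to $y_1$, both contained in the disk $Q$, must cross by the Jordan curve theorem, contradicting $P_0\cap P_1=\emptyset$.

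For the separating common tangents, I fix such a tangent $\ell$ and let $\nu$ be its unit normal pointing into the side containing $Q_0$; writing $h_C$ for support functions, the conditions ``$\ell$ touches $Q_0$ and has $Q_0$ on the $\nu$-side'' and ``$\ell$ touches $Q_1$ and has $Q_1$ on the other side'' say $\min_{Q_0}\langle\cdot,\nu\rangle=c=\max_{Q_1}\langle\cdot,\nu\rangle$ for a common $c$, equivalently $h_{-Q_0}(\nu)+h_{Q_1}(\nu)=h_R(\nu)=0$; that is, the supporting line of $R=Q_1\oplus(-Q_0)$ with outer normal $\nu$ passes through the origin. This sets up a bijection between separating common tangents and supporting lines of the two-dimensional convex polygon $R$ through the origin: when $Q_0\cap Q_1=\emptyset$ the origin is outside $R$, so there are exactly two (a convex polygon has exactly two supporting lines through an exterior point); when $Q_0\cap Q_1\neq\emptyset$ the origin is interior to $R$ by (B), so there are none. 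The part I expect to be most delicate is the disjoint case for outer tangents: turning ``at least four bridges'' into two disk-arcs with interleaved endpoints and applying the Jordan curve theorem cleanly — in particular allowing the sub-arcs of $\partial P_0$ and $\partial P_1$ to touch $\partial Q$ — and, throughout, making the general-position exclusions in (A) and (B) airtight.
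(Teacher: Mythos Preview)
The paper gives no proof of this lemma at all; it is introduced as ``folklore properties of tangents of polygons'' and left unproved. Your write-up therefore supplies what the paper omits, and the arguments you outline are sound: the reduction to convex hulls is correct, and your two tools---counting \emph{bridges} on $\partial\CH(P_0\cup P_1)$ for outer tangents, and supporting lines of the Minkowski difference $Q_1\oplus(-Q_0)$ through the origin for separating tangents---are clean, standard ways to obtain the exact counts. The general-position consequences (A) and (B) are handled carefully, and (B) is exactly what is needed to rule out the degenerate case where the origin lies on $\partial R$.

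The one step you correctly flag as delicate, the crossing argument showing that four or more bridges force $\partial P_0\cap\partial P_1\neq\emptyset$, is a genuine topological lemma (two arcs in a closed disk whose four endpoints interleave on the boundary must meet). It is true and well known, but in a fully written proof you would want either a short self-contained argument---e.g.\ map the disk to a square with the four points at side midpoints and invoke the Jordan separation property of an arc joining opposite sides---or a reference. With that caveat, your plan is complete and correct; there is nothing in the paper to compare it against.
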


\section{Computing separating common tangents}\label{sepTan}

In this section, we assume that the corners of $P_0$ and $P_1$ are both
given in counterclockwise order.
We prove that Algorithm \ref{sepTanAlg} returns a pair of indices
$(s_0,s_1)$ such that the line
$\LLL(\pp 0{s_0},\pp 1{s_1})$ is a separating common tangent with
$P_k$ contained in $\RHP(\pp{1-k}{s_{1-k}},\pp k{s_k})$ for $k=0,1$.
If the tangent does not exist, the algorithm returns $\NULL$.
The other separating common tangent can be found by a similar algorithm if the corners of the polygons
are given in clockwise order and `$=1$' is changed to `$=-1$' in line \ref{testSide}.

\begin{algorithm}[H]
\LinesNumbered
\DontPrintSemicolon
\SetArgSty{}
\SetKwInput{Input}{Input}
\SetKwInput{Output}{Output}
\SetKw{Report}{report}
\SetKwIF{If}{ElseIf}{Else}{if}{}{else if}{else}{end if}
\SetKwFor{Foreach}{for each}{}{end for}
\SetKwFor{While}{while}{}{end while}
$s_0\gets 0$;\quad$t_0\gets 1$;\quad$s_1\gets 0$;\quad$t_1\gets 1$;\quad$u\gets 0$\;
\nllabel{initialize}
\While{$t_0<3n_0$ or $t_1<3n_1$} {\nllabel{while}
  \If {$\TTT(\pp {1-u}{s_{1-u}},\pp u{s_u},\pp u{t_u})=1$} {\nllabel{testSide}
    \If {$t_u\geq 2n_u$} {\nllabel{testStop}
      \Return {$\texttt{NULL}$}\;\nllabel{testStopAct}
    }  
    $s_u\gets t_u$\; \nllabel{updateS}
    $t_{1-u}\gets s_{1-u}+1$\;
  }
  $t_u\gets t_u+1$\;
  $u\gets 1-u$\;
}
\Return {$(s_0,s_1)$}
\caption{$\ttt{SeparatingCommonTangent}(P_0,P_1)$}
\label{sepTanAlg}
\end{algorithm}

\begin{figure}
\centering
\begin{tikzpicture}[scale=0.2]

\input{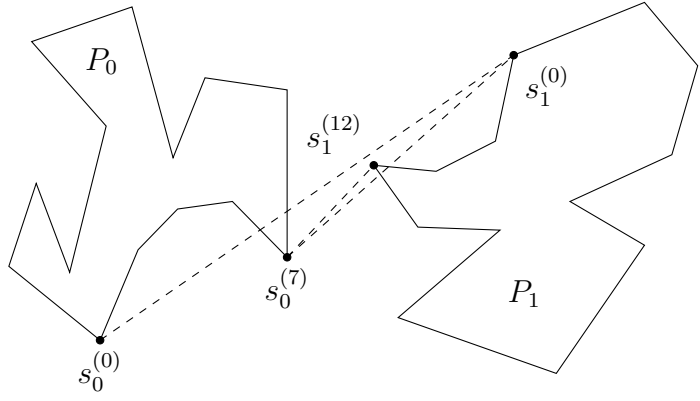}

\draw [dashed] (I) -- (T);
\draw [dashed] (Y) -- (T);
\draw [dashed] (Y) -- (W);

\filldraw [black] (I) circle (7pt);
\node at (I) [below] {$s_0^{(0)}$};
\filldraw [black] (T) circle (7pt);
\node at (T) [anchor=north west] {$s_1^{(0)}$};
\filldraw [black] (Y) circle (7pt);
\node at (Y) [below] {$s_0^{(7)}$};
\filldraw [black] (W) circle (7pt);
\node at (W) [anchor = south east] {$s_1^{(12)}$};



\end{tikzpicture}
\caption{Algorithm \ref{sepTanAlg} running on two polygons $P_0$ and $P_1$.
The corners $\pp k{s_k^{(i)}}$ are marked and labeled as
$s_k^{(i)}$ for the initial values
$s_k^{(0)}$ and after each iteration $i$ where an update of $s_k$ happens.
The segments $\pp 0{s_0^{(i)}}\pp 1{s_1^{(i)}}$
on the temporary line are dashed.}
\label{algRunningEx}
\end{figure}

The algorithm traverses the polygons in parallel one corner at a time
using the indices $t_0$ and $t_1$.
We say that the indices $(s_0,s_1)$
define a \emph{temporary line}, which is the line $\LLL(\pp 0{s_0},\pp 1{s_1})$.
We update the indices $s_0$ and $s_1$ until the temporary line is
the separating common tangent.
At the beginning of an iteration of the loop at line \ref{while}, we traverse one corner
$\pp u{t_u}$ of $P_u$, $u=0,1$. If the corner happens to be on the wrong side of the intermediate
line, we make the temporary line pass through that corner by
updating $s_u$ to $t_u$ and we reset $t_{1-u}$ to $s_{1-u}+1$. The reason for
resetting $t_{1-u}$ is that a
corner of $P_{1-u}$ which was on the correct side of the old temporary line can
be on the wrong side of the new line and thus needs be traversed again.

We show that if the temporary line is not a separating common tangent after each polygon
has been traversed twice,
then the convex hulls of the polygons are not disjoint.
Therefore, if a corner is found to be on the wrong side of the temporary line
when a polygon is traversed for the third time,
no separating common tangent can exist and
$\NULL$ is returned.
Let $s_k^{(i)}$ be the value of $s_k$ after $i=0,1,\ldots$ iterations, $k=0,1$.
We always have $s_k^{(0)}=0$ due to the initialization of $s_k$.
See Figure \ref{algRunningEx}.

Assume that $s_0$ is updated in line \ref{updateS} in iteration $i$.
The point $\pp 0{s_0^{(i)}}$ is in the half-plane
$\LHP(\pp {1}{s_{1}^{(i-1)}},\pp 0{s_0^{(i-1)}})$,
but not on the line $\LLL(\pp {1}{s_{1}^{(i-1)}},\pp 0{s_0^{(i-1)}})$. Therefore, we
have the following observation.

\begin{observation}\label{rotation}
When $s_k$ is updated, the temporary line is rotated counterclockwise around $s_{1-k}$ by an angle
less than $\pi$.
\end{observation}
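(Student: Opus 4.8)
The plan is to reduce this geometric statement to a single orientation test, using the fact already recorded in the sentence immediately preceding the observation. It suffices to treat the case $k=0$, i.e.\ $s_0$ is updated in line \ref{updateS} during some iteration $i$; the case $k=1$ is obtained by interchanging the two polygons. Abbreviate $q=\pp 1{s_1^{(i-1)}}$, $a=\pp 0{s_0^{(i-1)}}$, and $b=\pp 0{s_0^{(i)}}$. First I would check that the pivot does not move: line \ref{updateS} changes only $s_u=s_0$ and resets $t_{1-u}=t_1$, so $s_1^{(i)}=s_1^{(i-1)}$; hence $\pp 1{s_1}$ equals $q$ both before and after the update, and ``rotation around $s_{1-k}=s_1$'' is meaningful. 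The temporary line is $\LLL(a,q)$ before the update and $\LLL(b,q)$ after it, and both pass through $q$. By the general position assumption, $P_0$ and $P_1$ share no corner, so $a\neq q$ and $b\neq q$, and the rays $r_a$ from $q$ through $a$ and $r_b$ from $q$ through $b$ are well defined.

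Next I would turn the hypothesis into a statement about angles. As noted just before the observation, $b$ lies in $\LHP(q,a)$ but not on $\LLL(q,a)$, which by the definitions of $\LHP$ and $\TTT$ means $\TTT(q,a,b)=1$. Unwinding the definition gives $\TTT(q,a,b)=\textrm{sgn}\bigl((a-q)^{\perp}\cdot(b-q)\bigr)$, and since $x\mapsto x^{\perp}$ is the counterclockwise quarter-turn, $(a-q)^{\perp}\cdot(b-q)>0$ holds exactly when the counterclockwise angle $\psi$ measured from $r_a$ to $r_b$ lies strictly in the open interval $(0,\pi)$. So such a $\psi\in(0,\pi)$ exists.

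Finally, the counterclockwise rotation about $q$ by the angle $\psi$ sends $r_a$ to $r_b$. Since $\LLL(a,q)$ is the union of $r_a$ and its opposite ray, and a rotation about $q$ maps lines through $q$ to lines through $q$, the same rotation sends $\LLL(a,q)$ to $\LLL(b,q)$. As $q$ is fixed and $\psi<\pi$, this is precisely the assertion of the observation for $k=0$.

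I do not expect a genuine obstacle here; the observation is essentially a reformulation of the sign test in line \ref{testSide}. The points that need care are bookkeeping: that the pivot corner $\pp 1{s_1}$ really is untouched by the update, so that ``rotated around $s_{1-k}$'' is justified; the line-versus-ray distinction, since rotating a line by $\psi$ and by $\psi+\pi$ have the same effect, so one must argue through the rays to conclude the angle can be taken below $\pi$ rather than merely below $2\pi$; and matching the equivalence $\TTT(q,a,b)=1\iff\psi\in(0,\pi)$ with the paper's sign conventions for $\TTT$ and for $x\mapsto x^{\perp}$.
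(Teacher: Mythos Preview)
Your proposal is correct and follows exactly the approach the paper uses: the paper's entire justification is the sentence immediately preceding the observation (that $\pp 0{s_0^{(i)}}$ lies in $\LHP(\pp 1{s_1^{(i-1)}},\pp 0{s_0^{(i-1)}})$ but not on the line), and you simply spell out why that sentence implies the rotation claim, adding the bookkeeping that $s_{1-k}$ is unchanged and the ray-versus-line distinction. One cosmetic remark: your displayed formula $\TTT(q,a,b)=\textrm{sgn}\bigl((a-q)^{\perp}\cdot(b-q)\bigr)$ is not literally the paper's definition, which has $(b-a)$ in place of $(b-q)$, but the two agree because $(a-q)^{\perp}\cdot(a-q)=0$; you might note this in passing.
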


Assume in the following
that the convex hulls of $P_0$ and $P_1$ are disjoint so that separating common tangents exist.
Let $(r_0,r_1)$ be the indices that define the separating common tangent such that
$P_k$ is contained in $\RHP(\pp{1-k}{r_{1-k}},\pp k{r_k})$, i.e.,
$(r_0,r_1)$ is the result we are going to prove that the algorithm returns.

Since $\CH(P_k)$ is convex, the temporary line always divides
$\CH(P_k)$ into two convex parts.
If we follow the temporary line from
$\pp {1-k}{s_{1-k}}$ in the direction towards
$\pp k{s_k}$, we enter $\CH(P_k)$ at some point $x$
and thereafter leave $\CH(P_k)$ again at some
point $y$.
We clearly have $x=y$ if and only if the temporary line is a tangent to $\CH(P_k)$,
since if $x=y$ and the line was no tangent,
$\CH(P_k)$ would only be a line segment.
The part of the boundary of $\CH(P_k)$ counterclockwise from $x$ to $y$ is in
$\RHP(\pp {1-k}{s_{1-k}},\pp k{s_k})$ whereas the part from $y$ to $x$ is on
$\LHP(\pp {1-k}{s_{1-k}},\pp k{s_k})$. We therefore have the following observation.

\begin{observation}\label{kalotten}
Let $d$ be the index of the corner of $\CH(P_k)$ strictly after $y$ in counterclockwise order.
There exists a corner $\pp kt$ of $P_k$ such that $\TTT(\pp {1-k}{s_{1-k}},\pp k{s_k},\pp kt)=1$
if and only if $\TTT(\pp {1-k}{s_{1-k}},\pp k{s_k},\pp kd)=1$.
\end{observation}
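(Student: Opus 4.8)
The plan is to prove the two directions of the biconditional separately. Throughout, write $a=\pp{1-k}{s_{1-k}}$ and $b=\pp k{s_k}$, so that the temporary line is $\LLL(a,b)$, and recall that $\TTT(a,b,c)=1$ means precisely that $c$ lies strictly inside $\LHP(a,b)$, i.e.\ in $\LHP(a,b)\setminus\LLL(a,b)$. Observe first that, by general position, no edge of $\CH(P_k)$ can lie on $\LLL(a,b)$, since its two endpoints (corners of $P_k$) would then be collinear with the corner $a$ of $P_{1-k}$. One direction of the statement is now immediate: if $\TTT(a,b,\pp kd)=1$, then since $\pp kd$ is a corner of $\CH(P_k)$ and hence, by the folklore fact that the corners of $\CH(P_k)$ are corners of $P_k$, a corner of $P_k$, we may take $t=d$.

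For the converse I would assume that some corner of $P_k$ --- hence some point of $\CH(P_k)$ --- lies strictly inside $\LHP(a,b)$, equivalently that $\CH(P_k)\not\subseteq\RHP(a,b)$, and deduce $\TTT(a,b,\pp kd)=1$. Since $b\in\CH(P_k)\cap\LLL(a,b)$, the points $x$ and $y$ at which the directed line from $a$ to $b$ enters and leaves $\CH(P_k)$ are well defined and $b$ lies on the segment $xy$. If $x=y$, the line is tangent to $\CH(P_k)$, so $\CH(P_k)$ lies in one of the closed half-planes bounded by $\LLL(a,b)$; it cannot be $\RHP(a,b)$ by assumption, so $\CH(P_k)\subseteq\LHP(a,b)$, and since no edge of $\CH(P_k)$ lies on $\LLL(a,b)$ the hull meets $\LLL(a,b)$ only in $b$. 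Because $\pp kd\neq b$ --- it comes strictly after $y=b$ --- this already gives $\pp kd\in\LHP(a,b)\setminus\LLL(a,b)$, i.e.\ $\TTT(a,b,\pp kd)=1$.

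The substantial case is $x\neq y$. First I would observe that the open segment between $x$ and $y$ lies in the interior of $\CH(P_k)$: otherwise the segment $xy$ would lie in $\partial\CH(P_k)$, hence on a single edge of $\CH(P_k)$, contradicting the observation above. Thus $\partial\CH(P_k)\cap\LLL(a,b)=\{x,y\}$. The crucial claim is that $\pp kd$ lies strictly between $y$ and $x$ on the counterclockwise boundary arc from $y$ to $x$, which, as established just before the statement, is contained in $\LHP(a,b)$. Suppose this fails. Since $\pp kd$ comes strictly after $y$ we have $\pp kd\neq y$; and if $\pp kd=x$, or if $\pp kd$ lay on the complementary arc instead, then in either case no corner of $\CH(P_k)$ could lie strictly between $y$ and $x$ on the counterclockwise arc from $y$ to $x$, so that arc would be a single segment contained in one edge of $\CH(P_k)$; as its endpoints $x\neq y$ both lie on $\LLL(a,b)$, that edge would lie on $\LLL(a,b)$, again a contradiction. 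Hence $\pp kd$ does lie strictly between $y$ and $x$ on the arc, so $\pp kd\in\LHP(a,b)$; and since $\pp kd\notin\{x,y\}=\partial\CH(P_k)\cap\LLL(a,b)$ we conclude $\pp kd\notin\LLL(a,b)$, so $\TTT(a,b,\pp kd)=1$.

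The step I expect to be the main obstacle is this last one. The difficulty is that $y$ need not be a corner of $\CH(P_k)$, and that it is not clear a priori that $\pp kd$ lies on the left-hand boundary arc at all, so one must carefully rule out each degenerate alternative --- $\pp kd$ coinciding with $x$, $\pp kd$ sitting on the complementary arc, or an edge of $\CH(P_k)$ running along $\LLL(a,b)$ --- using in each case the assumption that no three of the corners of $P_0$ and $P_1$ are collinear. Everything else follows directly from the convexity of $\CH(P_k)$ and from the description of the points $x$ and $y$ already set up before the statement.
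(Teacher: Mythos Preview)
Your argument is correct. The paper does not actually give a proof of this observation; it states it immediately after noting that the boundary of $\CH(P_k)$ is split by $x$ and $y$ into a counterclockwise arc from $x$ to $y$ contained in $\RHP(a,b)$ and a counterclockwise arc from $y$ to $x$ contained in $\LHP(a,b)$, and treats the observation as evident from that. Your proof is exactly a fleshing-out of that implicit reasoning: the general-position argument that no edge of $\CH(P_k)$ can lie on $\LLL(a,b)$ is precisely what is needed to force $\pp kd$ onto the open left arc (and to handle the tangent case $x=y$), and everything else is the same convexity picture the paper sets up. So your route coincides with the paper's intended one, only made explicit.
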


Let $c_k$ be the index of the first corner of $\CH(P_k)$
when following $\CH(P_k)$ in counterclockwise order from $y$,
$c_k=0,\ldots,n_k-1$. If $y$ is itself a corner
of $\CH(P_k)$, we have $\pp k{c_k}=y$.
By Observation \ref{kalotten} we see that
$\TTT(\pp {1-k}{s_{1-k}},\pp k{s_k},\pp k{c_k})\geq 0$ with equality if and only if
$\pp k{c_k}=\pp k{s_k}=y$.
Let $c_k^{(0)}$ be $c_k$ when only line \ref{initialize} has been executed.
Consider now the value of $c_k$ after $i=1,2,\ldots$ iterations.
Let $c_k^{(i)}=c_k$ and add $n_k$ to $c_k^{(i)}$
until $c_k^{(i)}\geq c_k^{(i-1)}$.
This gives a non-decreasing sequence of indices $c_k^{(0)},c_k^{(1)},\ldots$
of the first corner of $\CH(P_k)$ in $\LHP(\pp{1-k}{s_{1-k}},\pp k{s_k})$.
Actually, we prove in the following that we need to add $n_k$ to $c_k^{(i)}$ at most once
before $c_k^{(i)}\geq c_k^{(i-1)}$.
If $r_k<c_k^{(0)}$ we add $n_k$ to $r_k$. Thus we have
$0=s_k^{(0)}\leq c_k^{(0)}\leq r_k<2n_k$.

The following lemma intuitively says that the algorithm does not ``jump over'' the correct
solution and it expresses the main idea in our proof of correctness.

\begin{lemma}\label{mainLemma}
After each iteration $i=0,1,\ldots$ and for each $k=0,1$ we have
$$0\leq s_k^{(i)}\leq c_k^{(i)}\leq r_k < 2n_k.$$
Furthermore, the test in line \ref{testStop} is never positive.
\end{lemma}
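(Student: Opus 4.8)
The plan is to prove Lemma~\ref{mainLemma} by induction on the iteration count $i$, maintaining the chain of inequalities $0 \le s_k^{(i)} \le c_k^{(i)} \le r_k < 2n_k$ for both $k = 0, 1$ simultaneously, and deriving the claim about line~\ref{testStop} as a corollary. The base case $i = 0$ is immediate from the initialization and the definitions: $s_k^{(0)} = 0$, and we arranged $0 \le c_k^{(0)} \le r_k < 2n_k$ by the normalization just before the lemma statement. For the inductive step, I would fix an iteration $i$ and let $u$ be the polygon index processed in that iteration; only $s_u$ (and the auxiliary pointers $t_0, t_1$) can change, so $s_{1-u}^{(i)} = s_{1-u}^{(i-1)}$ and I only need to track how $c_{1-u}$, $c_u$, $s_u$ and $r$ behave. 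The interesting case is when the test in line~\ref{testSide} fires, i.e.\ $\TTT(\pp{1-u}{s_{1-u}^{(i-1)}}, \pp u{s_u^{(i-1)}}, \pp u{t_u^{(i-1)}}) = 1$, because otherwise $s_u$ is unchanged and only $t_u$ increments, which I must argue does not disturb the $c_k$-sequence being non-decreasing and does not push $c_u$ past $r_u$.

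When the test fires, the key geometric facts I would use are Observation~\ref{rotation} (the temporary line rotates counterclockwise around $\pp{1-u}{s_{1-u}}$ by an angle less than $\pi$) together with Observation~\ref{kalotten} and the setup around it (the corner $\pp u{c_u}$ is the first corner of $\CH(P_u)$ in $\LHP$, and $\pp u{d}$ — the corner right after the exit point $y$ — being strictly left of the temporary line is equivalent to some corner of $P_u$ being strictly left of it). The heart of the argument is: since $\pp u{t_u^{(i-1)}}$ is strictly to the left, the temporary line is not yet the tangent $\LLL(\pp{1-u}{r_{1-u}}, \pp u{r_u})$, so $s_u^{(i-1)}$ is strictly ``before'' $r_u$ in the relevant cyclic sense; after rotating counterclockwise by less than $\pi$ to pass through $\pp u{t_u^{(i-1)}}$, the new value $s_u^{(i)} = t_u^{(i-1)}$ still cannot have passed $r_u$, because passing $r_u$ would require the temporary line to sweep through the tangent direction, and the corner $\pp u{r_u}$ lies in $\RHP$ of the tangent — so moving $s_u$ to or beyond $r_u$ would force $\pp u{r_u}$ to be left of the new temporary line, contradicting that $P_u \subseteq \RHP(\pp{1-u}{r_{1-u}}, \pp u{r_u})$ combined with the rotation bound. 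I would also use the inductive hypothesis $s_u^{(i-1)} \le c_u^{(i-1)} \le r_u$ to locate $t_u^{(i-1)}$: during this pass $t_u$ has only advanced from $s_u^{(i-1)} + 1$ (or from $1$ at the very start), so $t_u^{(i-1)}$ cannot yet have reached $c_u^{(i-1)}$ without the test having fired earlier — this is where the claim ``we add $n_k$ at most once'' gets pinned down, and it gives $s_u^{(i)} = t_u^{(i-1)} \le c_u^{(i)} \le r_u$.

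After handling $s_u$ and $c_u$, I would check the effect on the other polygon: the reset $t_{1-u} \gets s_{1-u} + 1$ does not change $s_{1-u}^{(i)}$, $c_{1-u}^{(i)}$, or $r_{1-u}$, so the chain for index $1-u$ is inherited unchanged from iteration $i-1$; I only need to confirm that re-advancing $t_{1-u}$ from scratch in later iterations is consistent with the same invariant, which follows by the same reasoning applied when polygon $1-u$ is next processed. Finally, for the statement about line~\ref{testStop}: the test $t_u \ge 2n_u$ triggering the return of $\NULL$ happens only inside the ``test fired'' branch, hence only when $s_u$ is about to be set to $t_u$; but we have just shown $s_u^{(i)} = t_u \le r_u < 2n_u$, so $t_u < 2n_u$ at that moment, i.e.\ the test in line~\ref{testStop} is false — contradiction with it being positive. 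Hence it is never positive, and in particular the algorithm never returns $\NULL$ when the convex hulls are disjoint.

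The main obstacle I anticipate is making the phrase ``the algorithm does not jump over the correct solution'' fully rigorous: precisely, showing that a single counterclockwise rotation of less than $\pi$ starting from a temporary line with $s_u^{(i-1)} \le r_u$ cannot land with $s_u^{(i)} > r_u$. This requires carefully relating the cyclic order of corner indices on $P_u$ (and on $\CH(P_u)$) to the angular order of the directions from $\pp{1-u}{s_{1-u}}$, and using that $\pp{1-u}{s_{1-u}}$ itself may have been updated in earlier iterations — so one has to argue the bound is robust to those earlier changes, presumably by observing that $s_{1-u}$ only ever moves in a way that keeps $\pp{1-u}{s_{1-u}}$ on the correct side, which is exactly the symmetric half of the induction. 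Getting the bookkeeping of which pointer was reset when, and that $c_k$ needs $n_k$ added at most once, to line up cleanly with the rotation argument is the delicate part; the rest is routine case analysis.
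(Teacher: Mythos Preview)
Your inductive scaffolding matches the paper's, but there is a genuine gap in how you handle the ``other'' index $1-u$. You write that the reset $t_{1-u}\gets s_{1-u}+1$ does not change $s_{1-u}^{(i)}$, $c_{1-u}^{(i)}$, or $r_{1-u}$, so the chain for index $1-u$ is inherited unchanged. The statement about $s_{1-u}$ and $r_{1-u}$ is fine, but $c_{1-u}$ is \emph{defined} via the exit point of the temporary line from $\CH(P_{1-u})$, and the temporary line just moved because $s_u$ was updated. So $c_{1-u}^{(i)}$ can and typically does differ from $c_{1-u}^{(i-1)}$, and you must prove $c_{1-u}^{(i)}\le r_{1-u}$ in this case too. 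The paper does exactly this: it fixes $k=0$ and treats the two cases $u=0$ and $u=1$ separately, in both cases tracking where the new exit point $v$ of the temporary line on $\CH(P_0)$ lands. The $u=1$ case is not trivial; it reuses the full geometric argument from the $u=0$ case.

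That geometric argument is also where your sketch is too soft. Your plan is to say that passing $r_u$ would require ``sweeping through the tangent direction,'' but the rotation is about $\pp{1-u}{s_{1-u}}$, which is generally not $\pp{1-u}{r_{1-u}}$, so there is no single ``tangent direction'' through the pivot to sweep past; and the conclusion you need is about $c_u^{(i)}$ (equivalently, the new exit point $v$), not about $s_u^{(i)}$. The paper's resolution is concrete: it forms the simple polygon $\mathcal C$ bounded by the arc of $\CH(P_0)$ from $\pp 0{r_0}$ to $x$, the segment $xw$ (where $w$ is the intersection of the old temporary line with the tangent $\LLL(\pp 0{r_0},\pp 1{r_1})$), and the segment $w\pp 0{r_0}$; then it argues the new temporary line, if it entered $\mathcal C$ at $v$, could not exit $\mathcal C$ through any of its three sides, a contradiction. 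This is the missing idea you flagged as the ``main obstacle,'' and without something of this flavor the inductive step does not close for either $c_u$ or $c_{1-u}$.
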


\begin{proof}
We prove the lemma for $k=0$.
From the definition of $r_0$, we get that
$0=s_0^{(0)}\leq c_0^{(0)}\leq r_0 < 2n_0$.
Since the sequence $s_0^{(0)},s_0^{(1)},\ldots$ is non-decreasing, the
inequality $0\leq s_k^{(i)}$ is true for every $i$.

Now, assume inductively that $s_0^{(i-1)}\leq c_0^{(i-1)}\leq r_0$ and consider what happens during
iteration $i$. If neither $s_0$ nor $s_1$ is updated,
the statement is trivially true from the induction
hypothesis, so assume that an update happens.

By the \emph{old temporary line} we mean the temporary line defined by
$(s_0^{(i-1)},s_1^{(i-1)})$ and the \emph{new temporary line} is the one
defined by $(s_0^{(i)},s_1^{(i)})$.
The old temporary line enters $\CH(P_0)$ at some point $x$ and exits at some point
$y$ when followed from $\pp 1{s_1^{(i-1)}}$. Likewise,
let $v$ be the point where the new temporary line exits $\CH(P_0)$ when followed from
$\pp 1{s_1^{(i)}}$. The point $x$ exists since the convex hulls are disjoint.

Assume first that the variable $u$ in the algorithm is $0$,
i.e., a corner of the polygon $P_0$ is traversed. In this case
$s_{1}^{(i-1)}=s_{1}^{(i)}$.

\begin{figure}
\centering
\begin{tikzpicture}[scale=0.2]

\input{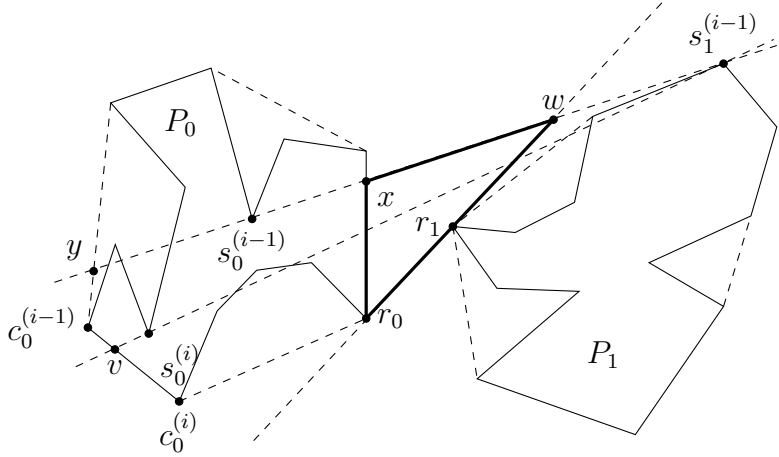}

\draw [dashed] (Z) -- (C);
\draw [dashed] (D) -- (H);
\draw [dashed] (I) -- (Y);

\draw [dashed] (W) -- (M);
\draw [dashed] (O) -- (Q);
\draw [dashed] (T) -- (W);

\filldraw [black] (Y) circle (7pt);
\node at (Y) [right] {$r_0$};

\filldraw [black] (W) circle (7pt);
\node at (W) [left] {$r_1$};

\filldraw [black] (B) circle (7pt);
\node at (B) [below] {$s_0^{(i-1)}$};

\filldraw [black] (S) circle (7pt);
\node at (S) [above] {$s_1^{(i-1)}$};

\filldraw [black] (H) circle (7pt);
\node at (H) [left] {$c_0^{(i-1)}$};

\draw (intersection of S--B and Y--Z) coordinate (XX);
\filldraw [black] (XX) circle (7pt);
\node at (XX) [anchor = north west] {$x$};

\draw (intersection of S--B and D--H) coordinate (YY);
\filldraw [black] (YY) circle (7pt);
\node at (YY) [anchor = south east] {$y$};

\draw [dashed] (-1.1, 26.8) -- (46.4, 42.5);

\filldraw [black] (F) circle (7pt);
\node at (F) [anchor = north west] {$s_0^{(i)}$};

\filldraw [black] (I) circle (7pt);
\node at (I) [below] {$c_0^{(i)}$};

\draw (intersection of S--F and H--I) coordinate (VV);
\filldraw [black] (VV) circle (7pt);
\node at (VV) [below] {$v$};

\draw [dashed] (0.3, 21.2) -- (46.2, 42.9);




\draw [dashed] (12.0, 16.3) -- (39.0, 45.5);

\draw (intersection of B--S and Y--W) coordinate (WW);
\filldraw [black] (WW) circle (7pt);
\node at (WW) [above] {$w$};
\draw [very thick] (XX) -- (WW) -- (Y) -- cycle;

\end{tikzpicture}
\caption{An update of $s_0$ happens in iteration $i$ from $s_0^{(i-1)}$ to $s_0^{(i)}$
and $\pp 0{c_0}$ moves forward on $\CH(P_0)$ from $\pp 0{c_0^{(i-1)}}$ to $\pp 0{c_0^{(i)}}$.
The relevant corners are marked and labeled with their indices.
The polygon $\mathcal C$ from the proof of Lemma \ref{mainLemma} is
drawn with thick lines.}
\label{sUpdate}
\end{figure}

We now prove
$s_0^{(i)}\leq c_0^{(i)}$. Assume that 
$\pp 0{s_0^{(i-1)}}\neq \pp 0{c_0^{(i-1)}}$. The situation is depicted in Figure \ref{sUpdate}.
In this case
$\TTT(\pp {1}{s_{1}^{(i-1)}},\pp 0{s_0^{(i-1)}},\pp 0{c_0^{(i-1)}})=1$.
Hence, the update happens when $\pp 0{c_0^{(i-1)}}$ is traversed or
earlier, so $s_0^{(i)}\leq c_0^{(i-1)}\leq c_0^{(i)}$.
Assume now that $\pp 0{s_0^{(i-1)}}=\pp 0{c_0^{(i-1)}}$.
We cannot have $c_0^{(i)}=c_0^{(i-1)}$ since
$\TTT(\pp 1{s_1^{(i)}},\pp 0{s_0^{(i)}},\pp 0{c_0^{(i-1)}})=
-\TTT(\pp 1{s_1^{(i-1)}},\pp 0{s_0^{(i-1)}},\pp 0{s_0^{(i)}})=-1$, therefore
$c_0^{(i)}>c_0^{(i-1)}$.
Consider the corner $\pp 0{c'}$ on $\CH(P_0)$
following $\pp 0{c_0^{(i-1)}}$ in counterclockwise order, $c'>c_0^{(i-1)}$.
Due to the minimality of $c'$, we have $c'\leq c_0^{(i)}$.
By Observation \ref{kalotten},
$\TTT(\pp {1}{s_{1}^{(i-1)}},\pp 0{s_0^{(i-1)}},\pp 0{c'})=1$.
Therefore, $s_0$ must be updated when $\pp 0{c'}$ is traversed or earlier,
so $s_0^{(i)}\leq c'\leq c_0^{(i)}$.

For the inequality $c_0^{(i)}\leq r_0$,
consider the new temporary line in the direction from
$\pp {1}{s_{1}^{(i-1)}}$ to $\pp 0{s_0^{(i)}}$.
We prove that $v$ is in the part of $\CH(P_0)$
from $y$ counterclockwise to $r_0$.
The point $\pp 0{s_0^{(i)}}$ is in the polygon $Q$ defined by the segment $xy$ together with
the part of $\CH(P_0)$ from $y$ counterclockwise to $x$. Therefore, the new temporary
line enters and exits $Q$. It cannot exit through the segment $xy$, since the old and
new temporary lines intersect at $\pp {1}{s_{1}^{(i-1)}}$, which is in
$\CH(P_{1})$. Therefore, $v$ must be on the part of $\CH(P_0)$ from $y$ to $x$.
If $r_0$ is on the part of $\CH(P_0)$ from $x$ counterclockwise to $y$,
then $v$ is on the part from $y$ to $r_0$ as we wanted.

Otherwise, assume for contradiction that the points appear in the order
$y$, $\pp 0{r_0}$, $v$, $x$ counterclockwise along $\CH(P_0)$, where $\pp 0{r_0}\neq v\neq x$.
The endpoints of the segment $\pp {1}{s_{1}^{(i-1)}}x$ are on different
sides of the tangent defined by $(r_0,r_1)$, so the segment intersects the tangent at a point $w$.
The part of $\CH(P_0)$ from $\pp 0{r_0}$ to $x$
and the segments $xw$ and $w\pp 0{r_0}$ form a simple polygon
$\mathcal C$, see Figure \ref{sUpdate} for an example.
The new temporary line enters $\mathcal C$ at
the point $v$, so it must leave $\mathcal C$ after $v$.
The line cannot cross $\CH(P_0)$ after $v$ since $\CH(P_0)$ is convex.
It also cannot cross the segment $xw$ at a point after $v$ since the old and the new temporary
line cross before $v$, namely at $\pp {1}{s_{1}^{(i-1)}}$.
The tangent defined by $(r_0,r_1)$ and the new temporary line
intersect before $v$ since the endpoints of the segment
$\pp {1}{s_{1}^{(i-1)}}v$ are on different sides of the tangent.
Therefore, the line cannot cross the segment $w\pp 0{r_0}$ at a point after $v$.
Hence, the line cannot exit $\mathcal C$. That is a contradiction.

Therefore, $v$ is on the part of $\CH(P_0)$ from
$y$ to $\pp 0{r_0}$ and hence the first corner $\pp 0{c_0^{(i)}}$
of $\CH(P_0)$ after $v$
must be before or coincident with $\pp 0{r_0}$, so that $c_0^{(i)}\leq r_0$.

Assume now that $u=1$ in the beginning of iteration $i$, i.e.,
a corner of the other polygon $P_{1}$ is traversed.
In that case, we have $s_0^{(i)}=s_0^{(i-1)}\leq c_0^{(i-1)}\leq c_0^{(i)}$, and
we need only prove $c_0^{(i)}\leq r_0$.
Observation \ref{rotation} gives that
$v$ is in the part of $\CH(P_0)$ from $y$ to $x$, since the new temporary line is obtained
by rotating the old temporary line
counterclockwise around $\pp 0{s_0^{(i-1)}}$ by an angle less than $\pi$.
That $v$ appears before $\pp 0{r_0}$ on $\CH(P_0)$ counterclockwise from $y$ follows
from exactly the same arguments as in the case $u=0$.

We have nowhere used the test at line \ref{testStop} to conclude that $s_k<2n_k$.
Hence, the test is never positive. This completes the proof.
\end{proof}

We are now ready to prove that Algorithm \ref{sepTanAlg} has the desired properties.

\begin{theorem}\label{mainThm}
If the polygons $P_0$ and $P_1$ have separating common tangents, Algorithm \ref{sepTanAlg}
returns a pair of indices $(s_0,s_1)$ defining a separating common tangent
such that $P_k$ is contained in
$\RHP(\pp {1-k}{s_{1-k}},\pp k{s_k})$ for $k=0,1$. If no separating common tangents exist,
the algorithm returns $\texttt{NULL}$. The algorithm runs in linear time and uses
constant workspace.
\end{theorem}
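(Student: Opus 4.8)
The plan is to establish the four assertions of the theorem—that the returned pair defines a separating common tangent of the stated form when one exists, that $\NULL$ is returned otherwise, linear time, and constant workspace—by first isolating two loop invariants, then reading correctness off Lemma~\ref{mainLemma} together with the general position hypothesis, and finally bounding the running time by an amortized count. I would begin with two invariants, each proved by an easy induction on the number of completed iterations: (i) $t_u>s_u$ whenever line~\ref{testSide} is evaluated, and (ii) immediately after any execution of line~\ref{updateS} one has $t_0=s_0+1$ and $t_1=s_1+1$. Invariant~(i) shows that every update strictly increases some $s_u$; together with the test in line~\ref{testStop} this forces $0\le s_0<2n_0$ and $0\le s_1<2n_1$ throughout the run, \emph{unless} the algorithm returns $\NULL$. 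In the former case only finitely many updates can occur, and after the last one $t_0,t_1$ increase by one on alternate iterations, so the loop guard in line~\ref{while} eventually fails; hence the algorithm always halts, returning either $\NULL$ or a pair $(s_0,s_1)$.

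Next I would show that whenever the algorithm returns a pair $(s_0,s_1)$, the line $\LLL(\pp0{s_0},\pp1{s_1})$ is a separating common tangent with $P_k\subseteq\RHP(\pp{1-k}{s_{1-k}},\pp k{s_k})$ for $k=0,1$. At exit $t_0\ge3n_0$ and $t_1\ge3n_1$; let $j^\ast$ be the last iteration performing an update (or $j^\ast=0$). By invariant~(ii) the values of $t_k$ right after iteration $j^\ast$ equal $s_k+1$ for the returned $s_k$, and $s_k<2n_k$ because $\NULL$ was not returned. No update occurs after $j^\ast$, so $\TTT(\pp1{s_1},\pp0{s_0},\pp0t)\le0$ for every value $t$ that $t_0$ assumes after $j^\ast$, and these include $s_0+1,\dots,3n_0-1$, a full residue system modulo $n_0$ since $3n_0-1-s_0\ge n_0$. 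Thus every corner of $P_0$ lies in $\RHP(\pp1{s_1},\pp0{s_0})$, and by general position none other than $\pp0{s_0}$ lies on the line, so the line is a tangent of $P_0$ with $P_0\subseteq\RHP(\pp1{s_1},\pp0{s_0})$; symmetrically $P_1\subseteq\RHP(\pp0{s_0},\pp1{s_1})=\LHP(\pp1{s_1},\pp0{s_0})$, i.e.\ a separating common tangent of the required form. Now Lemma~\ref{mainLemma} states that line~\ref{testStop} is never positive when separating common tangents exist, so in that case $\NULL$ is not returned and the returned pair is as claimed; and if no separating common tangent exists the algorithm cannot return a pair (that would contradict the previous sentence together with Lemma~\ref{folklore}), so it returns $\NULL$.

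The running time is the step I expect to require the most care. Because $u$ alternates, the number of iterations is at most twice the number of iterations with $u=0$, plus one, and the latter equals the total number of increments of $t_0$. Partition the run into maximal blocks separated by updates, plus a final block; within a block $t_0$ only increases, by exactly the number of $u=0$ iterations of the block. If the block ends with an update of $s_0$ this number is the increase $\Delta s_0$ of $s_0$, since $t_0$ runs from $s_0+1$ up to the new value of $s_0$; if it ends with an update of $s_1$ it is at most $1+\Delta s_1$, by the alternation of $u$ and the analogous description of $t_1$; the final block contributes $O(n_0+n_1)$ since $t_0$ and $t_1$ barely exceed their thresholds before the loop stops (or, in the $\NULL$ case, before the check in line~\ref{testStop} fires). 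Summing over all blocks, the total number of increments of $t_0$ is at most $\sum\Delta s_0+\sum\Delta s_1$ plus the number of $s_1$-updates plus $O(n_0+n_1)$, which is $O(n_0+n_1)$ because $s_0,s_1$ are non-decreasing and bounded by $2n_0,2n_1$ (Lemma~\ref{mainLemma}, or the line~\ref{testStop} check when no separating common tangent exists). Hence the algorithm runs in $O(n_0+n_1)$ time, and it manifestly uses only a constant number of indices into the read-only input, each of $O(\log(n_0+n_1))$ bits, i.e.\ constant workspace. The only real obstacle is making the block bookkeeping precise—pinning down the off-by-ones between the $u=0$ and $u=1$ counts inside a block and the exact final value of $t_0$—after which the theorem follows.
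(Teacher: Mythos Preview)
Your argument is correct and follows essentially the same route as the paper: verify that a returned pair was tested over a full cycle of each polygon, invoke Lemma~\ref{mainLemma} to rule out the $\NULL$ branch when tangents exist, and amortize the iteration count against the growth of $s_0+s_1$ (the paper does this last step more directly, showing that the number of iterations up to the last update is at most $2(s_0+s_1)$ and obtaining the explicit bound $5(n_0+n_1)$, which your block decomposition reproduces with more bookkeeping). One small slip: invariant~(ii) as literally stated is off---immediately after line~\ref{updateS} one has $t_u=s_u$, not $s_u+1$---but the version you actually use (that $t_0=s_0+1$ and $t_1=s_1+1$ at the \emph{end} of any updating iteration, after lines~7 and~8 have executed) is correct.
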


\begin{proof}
Assume first that the algorithm returns
$(s_0,s_1)$. We know that $s_k<2n_k$ for each $k=0,1$,
since we never update $s_k$ to values as large as $2n_k$.
Therefore, we have that
$\pp k{t}\in\RHP(\pp {1-k}{s_{1-k}},\pp k{s_k})$ for each
$k=0,1$ and each
$t\in\{2n_k,\ldots,3n_k-1\}$. Hence the pair $(s_0,s_1)$ indeed defines
the separating common tangent.

Assume now that there exists a separating common tangent.
By Lemma \ref{mainLemma}, a pair $(s_0,s_1)$ is returned. As we already saw,
this means that $(s_0,s_1)$ defines the separating common tangent.

If an update happens in iteration $i$,
the sum $s_0+s_1$ is increased by at least
$\frac{i-j}2$, where $j\geq 0$ was the previous iteration where an update happened.
Inductively, we see that when the final update of $s_0$ and $s_1$ happens,
there has been at most $2(s_0+s_1)$ iterations.
After the final update, at most $3n_0-s_0+3n_1-s_1$ iterations follow. In total,
the algorithm performs $3n_0+s_0+3n_1+s_1\leq 5(n_0+n_1)$ iterations.
\end{proof}

\section{Computing outer common tangents}

In this section, we assume that two polygons $P_0$ and $P_1$ are given such that
their convex hulls are disjoint. We assume that
the corners $\pp 00,\ldots,\pp 0{n_0-1}$ of $P_0$ are given in counterclockwise order
and the corners $\pp 10,\ldots,\pp 1{n_1-1}$ of $P_1$ are given in clockwise order.
We say that the \emph{orientation} of $P_0$ and $P_1$ is counterclockwise and clockwise,
respectively. We prove that
Algorithm \ref{outTanAlg}
returns two indices $(s_0,s_1)$ that define an outer common tangent
such that $P_0$ and $P_1$ are both
contained in $\RHP(\pp 0{s_0},\pp 1{s_1})$.

\begin{algorithm}[h]
\LinesNumbered
\DontPrintSemicolon
\SetArgSty{}
\SetKwInput{Input}{Input}
\SetKwInput{Output}{Output}
\SetKw{Report}{report}
\SetKwIF{If}{ElseIf}{Else}{if}{}{else if}{else}{end if}
\SetKwFor{Foreach}{for each}{}{end for}
\SetKwFor{While}{while}{}{end while}
$s_0\gets 0$;\quad$t_0\gets 1$;\quad$s_1\gets 0$;\quad$t_1\gets 1$;\quad$u\gets 0$\;
\nllabel{initialize2}
\While{$t_0<2n_0$ or $t_1<2n_1$} {\nllabel{while2}
  \If {$\TTT(\pp {0}{s_{0}},\pp 1{s_1},\pp u{t_u})=1$} {\nllabel{testSide2}
    $s_u\gets t_u$\;
    $t_{1-u}\gets s_{1-u}+1$\;
  }
  $t_u\gets t_u+1$\;
  $u\gets 1-u$\;
}
\Return {$(s_0,s_1)$}
\caption{$\ttt{OuterCommonTangent}(P_0,P_1)$}
\label{outTanAlg}
\end{algorithm}

\begin{figure}
\centering
\begin{tikzpicture}[scale=0.2]

\input{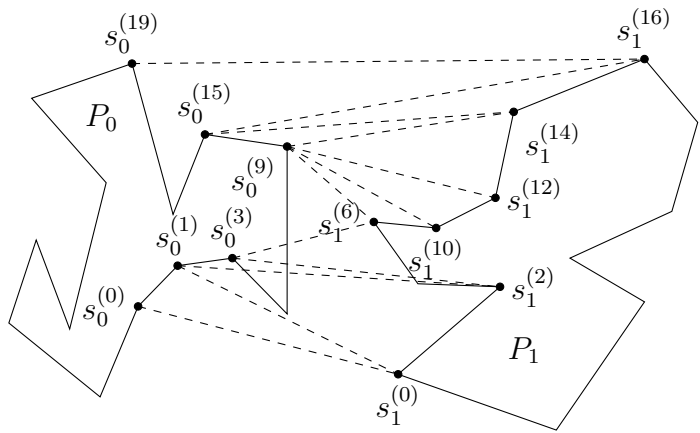}

\draw [dashed] (J) -- (M);
\filldraw [black] (J) circle (7pt);
\node at (J) [left] {$s_0^{(0)}$};
\filldraw [black] (M) circle (7pt);
\node at (M) [below] {$s_1^{(0)}$};

\draw [dashed] (K) -- (M);
\filldraw [black] (K) circle (7pt);
\node at (K) [above] {$s_0^{(1)}$};

\draw [dashed] (K) -- (Y1);
\filldraw [black] (Y1) circle (7pt);
\node at (Y1) [right] {$s_1^{(2)}$};

\draw [dashed] (L) -- (Y1);
\filldraw [black] (L) circle (7pt);
\node at (L) [above] {$s_0^{(3)}$};

\draw [dashed] (L) -- (W);
\filldraw [black] (W) circle (7pt);
\node at (W) [left] {$s_1^{(6)}$};

\draw [dashed] (Z) -- (W);
\filldraw [black] (Z) circle (7pt);
\node at (Z) [anchor = north east] {$s_0^{(9)}$};

\draw [dashed] (Z) -- (V);
\filldraw [black] (V) circle (7pt);
\node at (V) [below] {$s_1^{(10)}$};

\draw [dashed] (Z) -- (U);
\filldraw [black] (U) circle (7pt);
\node at (U) [right] {$s_1^{(12)}$};

\draw [dashed] (Z) -- (T);
\filldraw [black] (T) circle (7pt);
\node at (T) [anchor = north west] {$s_1^{(14)}$};

\draw [dashed] (A) -- (T);
\filldraw [black] (A) circle (7pt);
\node at (A) [above] {$s_0^{(15)}$};

\draw [dashed] (A) -- (S);
\filldraw [black] (S) circle (7pt);
\node at (S) [above] {$s_1^{(16)}$};

\draw [dashed] (C) -- (S);
\filldraw [black] (C) circle (7pt);
\node at (C) [above] {$s_0^{(19)}$};



\end{tikzpicture}
\caption{Algorithm \ref{outTanAlg} running on two polygons $P_0$ and $P_1$.
The corners $\pp k{s_k^{(i)}}$ are marked and labeled as
$s_k^{(i)}$ for the initial values
$s_k^{(0)}$ and after each iteration $i$ where an update of $s_k$ happens.
The segments $\pp 0{s_0^{(i)}}\pp 1{s_1^{(i)}}$
on the temporary line are dashed.}
\label{algOuterRunningEx}
\end{figure}

As in the case of separating common tangents,
we define $s_k^{(i)}$ as the value of $s_k$ after $i=0,1,\ldots$ iterations of
the loop at line \ref{while2} of Algorithm \ref{outTanAlg}. See Figure \ref{algOuterRunningEx}.
For this algorithm, we get a slightly different analogue to Observation \ref{rotation}:

\begin{observation}\label{rotation2}
When $s_k$ is updated, the temporary line is rotated around $s_{1-k}$
in the orientation of $P_{1-k}$ by an angle less than $\pi$.
\end{observation}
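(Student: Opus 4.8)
The plan is to reuse, with small changes, the reasoning that justifies Observation \ref{rotation} in the separating case, taking into account two differences: Algorithm \ref{outTanAlg} always orients its test in line \ref{testSide2} from $\pp 0{s_0}$ towards $\pp 1{s_1}$ regardless of the value of $u$, and the two polygons now carry opposite orientations, $P_0$ counterclockwise and $P_1$ clockwise. First I would note that in a single iteration only $s_u$ is changed, so when $s_k$ is updated we have $u=k$ and the point $\pp{1-k}{s_{1-k}}$, about which we claim the temporary line rotates, is genuinely fixed throughout that iteration. I would then split into the cases $u=0$ and $u=1$.

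In the case $u=0$, set $p=\pp 1{s_1^{(i-1)}}=\pp 1{s_1^{(i)}}$ for the fixed centre, $b=\pp 0{s_0^{(i-1)}}$ for the old endpoint, and $c=\pp 0{s_0^{(i)}}=\pp 0{t_0}$ for the new endpoint. Since $s_0$ was updated, the test succeeded, that is, $\TTT(b,p,c)=1$, and by the sign identities of Section 1 this is equivalent to $\TTT(p,b,c)=-1$, so $c$ lies strictly in the open right half-plane of the directed line from $p$ to $b$. The elementary fact I would invoke is that a point lies strictly to the right of the directed line from $p$ to $b$ exactly when the ray from $p$ through $b$ must be rotated clockwise by an angle in the open interval $(0,\pi)$ to coincide with the ray from $p$ through $c$; hence the temporary line is rotated clockwise about $\pp 1{s_1}$ by less than $\pi$, and clockwise is the orientation of $P_1=P_{1-k}$, as the observation asserts. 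The case $u=1$ is the mirror image: with $p=\pp 0{s_0}$ fixed, $b=\pp 1{s_1^{(i-1)}}$, and $c=\pp 1{s_1^{(i)}}=\pp 1{t_1}$, the successful test now reads $\TTT(p,b,c)=1$, so $c$ lies strictly to the left of the directed line from $p$ to $b$, which forces a counterclockwise rotation of the ray from $p$ by an angle in $(0,\pi)$; and counterclockwise is the orientation of $P_0=P_{1-k}$.

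I do not expect a genuine obstacle, since the statement is essentially a translation of the meaning of the test $\TTT=1$ into the language of rotations. The only place that needs care is the book-keeping: because Algorithm \ref{outTanAlg}, unlike Algorithm \ref{sepTanAlg}, always lists $\pp 0{s_0}$ first and $\pp 1{s_1}$ second in its test, one must track exactly which directed line the test refers to and which endpoint becomes the centre of rotation, and then match the resulting clockwise or counterclockwise direction against the orientations of $P_1$ and $P_0$. The strict inequality in the test, together with the general position assumption, is what rules out the degenerate rotation angles $0$ and $\pi$ and delivers the strict bound ``less than $\pi$''.
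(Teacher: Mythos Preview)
Your proposal is correct and follows the same reasoning as the paper, which states the observation without proof as an immediate consequence of the test in line~\ref{testSide2} (just as Observation~\ref{rotation} is justified by a single sentence preceding it). Your explicit case split on $u$ and the sign-identity computations are exactly the unpacking of that one-line justification, adapted to the fixed argument order $\TTT(\pp 0{s_0},\pp 1{s_1},\cdot)$ and the opposite orientations of $P_0$ and $P_1$.
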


Let $y$ be the point where the temporary line enters $\CH(P_k)$ when followed from
$\pp{1-k}{s_{1-k}}$ and $x$ the point where it exits $\CH(P_k)$.
We have the following analogue of Observation \ref{kalotten}.

\begin{observation}\label{kalotten2}
Let $d$ be the index of the corner of $\CH(P_k)$ strictly after $y$ following the orientation of $P_k$.
There exists a corner $\pp kt$ of $P_k$ such that $\TTT(\pp 0{s_0},\pp 1{s_1},\pp kt)=1$
if and only if $\TTT(\pp 0{s_0},\pp 1{s_1},\pp kd)=1$.
\end{observation}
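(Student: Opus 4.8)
The plan is to prove Observation~\ref{kalotten2} by mirroring the reasoning that precedes Observation~\ref{kalotten}, with the only extra care going into the orientations of the two polygons.

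First I would fix $k$ and pin down which half-plane is the ``wrong side''. Since the algorithm aims to produce $(s_0,s_1)$ with $P_k\subseteq\RHP(\pp 0{s_0},\pp 1{s_1})$, a corner $\pp kt$ is on the wrong side exactly when it lies in the open half-plane $\LHP(\pp 0{s_0},\pp 1{s_1})\setminus\LLL(\pp 0{s_0},\pp 1{s_1})$, i.e.\ when $\TTT(\pp 0{s_0},\pp 1{s_1},\pp kt)=1$. Using that $\TTT$ is unchanged when $(\pp 0{s_0},\pp 1{s_1})$ is replaced by any ordered pair of points of $\LLL(\pp 0{s_0},\pp 1{s_1})$, I re-express this half-plane in terms of the directed line traversed ``from $\pp{1-k}{s_{1-k}}$'': for $k=0$ this is the direction from $\pp 1{s_1}$ to $\pp 0{s_0}$, and for $k=1$ the direction from $\pp 0{s_0}$ to $\pp 1{s_1}$, and in each case $y$ is the entry point and $x$ the exit point on $\CH(P_k)$.

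Next I would invoke the standard convexity fact: if a directed line enters a convex polygon at $y$ and leaves it at $x$, the boundary arc from $y$ to $x$ taken counterclockwise lies in the closed right half-plane of the directed line and the complementary arc in the closed left half-plane. Combining this with the previous step and with the facts that the corners of $\CH(P_k)$ occur in the cyclic order of $P_k$ (first lemma) and that $P_0$ is oriented counterclockwise while $P_1$ is oriented clockwise, I obtain in both cases that the boundary arc of $\CH(P_k)$ from $y$ to $x$ in the orientation of $P_k$ lies in the closed wrong half-plane and the other arc in the closed good half-plane. Since the general position assumption forbids any corner other than $\pp k{s_k}$ from lying on $\LLL(\pp 0{s_0},\pp 1{s_1})$, the corners of $\CH(P_k)$ strictly on the wrong side are precisely those strictly between $y$ and $x$ on this first arc. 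The equivalence then closes as follows: if $\TTT(\pp 0{s_0},\pp 1{s_1},\pp kd)=1$ then $\pp kd$ is itself a wrong-side corner, giving one direction; conversely, if some corner $\pp kt$ of $P_k$ has $\TTT(\pp 0{s_0},\pp 1{s_1},\pp kt)=1$, then $\CH(P_k)$ meets the open wrong half-plane, hence (an edge of $\CH(P_k)$ joining two points of the closed good half-plane stays in it) has a corner strictly on the wrong side, which by the above lies strictly between $y$ and $x$; as $\pp kd$ is the very first corner of $\CH(P_k)$ after $y$ in the orientation of $P_k$, it is one of those corners, so $\TTT(\pp 0{s_0},\pp 1{s_1},\pp kd)=1$.

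The main obstacle is just the orientation bookkeeping in the middle step: one must check that for $k=0$ (polygon counterclockwise, directed line from $\pp 1{s_1}$ to $\pp 0{s_0}$) and for $k=1$ (polygon clockwise, directed line from $\pp 0{s_0}$ to $\pp 1{s_1}$) the conclusion ``the wrong-side arc is the one met first when leaving $y$ in the orientation of $P_k$'' really holds in both cases — the swap of orientation between $P_0$ and $P_1$ exactly compensates the swap in the direction of the temporary line. One should also verify the degenerate sub-cases in which $y$ or $x$ is not a corner of $\CH(P_k)$, or in which the temporary line is tangent to $\CH(P_k)$ so that $x=y$; in each of these the general position assumption keeps the equivalence intact, exactly as for Observation~\ref{kalotten}.
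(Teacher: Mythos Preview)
Your proposal is correct and follows the same approach the paper takes: the paper does not give a separate proof of Observation~\ref{kalotten2} but simply declares it the ``analogue of Observation~\ref{kalotten}'', relying on the convexity discussion preceding Observation~\ref{kalotten} (the temporary line splits $\partial\CH(P_k)$ into a right-half-plane arc and a left-half-plane arc). Your write-up makes explicit exactly the orientation bookkeeping that the paper leaves implicit, namely that the swap from counterclockwise $P_0$ to clockwise $P_1$ compensates the swap in the direction of traversal from $\pp{1-k}{s_{1-k}}$, so that in both cases the ``wrong-side'' arc is the one starting at $y$ in the orientation of $P_k$.
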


Let $c_k$ be the index of the first corner of $\CH(P_k)$ after $y$
following the orientation of $P_k$, where $\pp k{c_k}=y$ if $y$ is itself a corner of $\CH(P_k)$.
By Observation \ref{kalotten2},
we have $\TTT(\pp 0{s_0},\pp 1{s_1},\pp k{c_k})\geq 0$ with equality if and only if
$\pp k{c_k}=\pp k{s_k}=y$.
Define a non-decreasing sequence $c_k^{(0)},c_k^{(1)},\ldots$ of the value of
$c_k$ after $i=0,1,\ldots $ iterations as we did for separating tangents. Also, let the indices
$(r_0,r_1)$ define the outer common tangent that we want the algorithm to return
such that $c_k^{(0)}\leq r_k<2n_k$. We can now state the analogue to Lemma \ref{mainLemma}
for outer common tangents.

\begin{lemma}\label{mainLemma2}
After each iteration $i=0,1,\ldots$ and for each $k=0,1$ we have
$$0\leq s_k^{(i)}\leq c_k^{(i)}\leq r_k < 2n_k.$$
\end{lemma}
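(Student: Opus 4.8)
The plan is to mirror the proof of Lemma~\ref{mainLemma} almost verbatim, adapting each step to the changed sign conventions and orientations. As before, I would fix $k=0$ (the argument for $k=1$ is symmetric with the roles of the orientations swapped), and proceed by induction on the iteration count $i$. The base case $0=s_0^{(0)}\leq c_0^{(0)}\leq r_0<2n_0$ is immediate from the definition of $r_0$ and the setup preceding the lemma, and monotonicity of the sequences $s_0^{(i)}$ and $c_0^{(i)}$ handles $0\leq s_0^{(i)}$ and reduces everything to the case where an update actually happens in iteration $i$. I would introduce the \emph{old} and \emph{new} temporary lines exactly as in Lemma~\ref{mainLemma}, let $x$ and $y$ be the exit and entry points of the old temporary line on $\CH(P_0)$ when followed from $\pp 1{s_1^{(i-1)}}$, and let $v$ be where the new temporary line exits $\CH(P_0)$; note that here the entry/exit roles are reversed relative to Section~\ref{sepTan} because of the different orientation assumptions, so I must be careful to keep $v$ on the correct arc.

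For the inequality $s_0^{(i)}\leq c_0^{(i)}$, I would split into the two cases $\pp 0{s_0^{(i-1)}}\neq\pp 0{c_0^{(i-1)}}$ and $\pp 0{s_0^{(i-1)}}=\pp 0{c_0^{(i-1)}}$, just as in Lemma~\ref{mainLemma}: in the first case $\TTT(\pp 0{s_0^{(i-1)}},\pp 1{s_1^{(i-1)}},\pp 0{c_0^{(i-1)}})=1$ (using the revised test in line~\ref{testSide2}), so the update occurs no later than when $\pp 0{c_0^{(i-1)}}$ is traversed; in the second case I would use Observation~\ref{kalotten2} applied to the successor corner $\pp 0{c'}$ of $\pp 0{c_0^{(i-1)}}$ on $\CH(P_0)$ to force the update by the time $\pp 0{c'}$ is visited, giving $s_0^{(i)}\leq c'\leq c_0^{(i)}$. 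For the inequality $c_0^{(i)}\leq r_0$, I would again argue that $v$ lies on the arc of $\CH(P_0)$ strictly between $y$ and $\pp 0{r_0}$: when $u=0$ this uses the polygon $Q$ bounded by the chord $xy$ and the arc of $\CH(P_0)$ from $y$ to $x$, together with the observation that the old and new temporary lines meet at $\pp 1{s_1^{(i-1)}}\in\CH(P_1)$ so the new line cannot re-exit $Q$ through the chord; when $u=1$ it uses Observation~\ref{rotation2} in place of Observation~\ref{rotation}. In either case, assuming for contradiction that $y,\pp 0{r_0},v,x$ appear in this cyclic order, I would build the simple polygon $\mathcal C$ from the arc of $\CH(P_0)$ from $\pp 0{r_0}$ to $x$ and the segments $xw$ and $w\pp 0{r_0}$, where $w$ is the intersection of $\pp 1{s_1^{(i-1)}}x$ with the tangent $\LLL(\pp 0{r_0},\pp 1{r_1})$, and derive the contradiction that the new temporary line enters $\mathcal C$ at $v$ but can exit through none of its three boundary pieces.

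The one genuinely new subtlety, and the step I expect to require the most care, is bookkeeping the sign and orientation conventions: in Section~\ref{sepTan} both polygons are counterclockwise and the test is $\TTT(\pp{1-u}{s_{1-u}},\pp u{s_u},\pp u{t_u})=1$, whereas here $P_1$ is clockwise and the test is the asymmetric $\TTT(\pp 0{s_0},\pp 1{s_1},\pp u{t_u})=1$. I would verify at the outset that, under the new orientation assumptions, "being on the wrong side of the temporary line" for a corner of $P_0$ is still detected by $\TTT(\pp 0{s_0},\pp 1{s_1},\cdot)=1$ and that an update still rotates the temporary line so as to move the relevant arc of $\CH(P_0)$ monotonically toward the tangent — this is precisely what Observations~\ref{rotation2} and~\ref{kalotten2} encode, and they were stated exactly so that the geometric skeleton of the Lemma~\ref{mainLemma} proof transfers unchanged. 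Once those conventions are pinned down, every inequality chain and every "cannot exit $\mathcal C$" argument goes through word for word with the earlier proof, so I would present the adapted proof compactly, citing Lemma~\ref{mainLemma} for the portions that are literally identical and spelling out only the places where Observations~\ref{rotation2} and~\ref{kalotten2} replace their Section~\ref{sepTan} counterparts. Note also that there is no analogue of the "test in line~\ref{testStop} is never positive" clause, since Algorithm~\ref{outTanAlg} has no such test; the bound $s_k^{(i)}<2n_k$ is still part of the statement and follows, as before, from $s_k^{(i)}\leq r_k<2n_k$.
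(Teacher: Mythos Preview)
Your plan tracks the paper closely for the inequality $s_0^{(i)}\leq c_0^{(i)}$ and for the role of Observations~\ref{rotation2} and~\ref{kalotten2}, but the contradiction step for $c_0^{(i)}\leq r_0$ does not transfer as you claim. You propose to reuse the bounded polygon $\mathcal C$ from Lemma~\ref{mainLemma}, built from the arc of $\CH(P_0)$ between $\pp 0{r_0}$ and $x$ together with the segments $xw$ and $w\pp 0{r_0}$, where $w$ is the intersection of $\pp 1{s_1^{(i-1)}}x$ with the tangent $\LLL(\pp 0{r_0},\pp 1{r_1})$. In the separating case that intersection exists because the tangent \emph{separates} $P_1$ from $\CH(P_0)$, so $\pp 1{s_1^{(i-1)}}$ and $x$ lie on opposite sides of it. For an \emph{outer} common tangent, however, both $P_0$ and $P_1$ lie in $\RHP(\pp 0{r_0},\pp 1{r_1})$; hence $\pp 1{s_1^{(i-1)}}$ and $x$ are on the same side, the segment $\pp 1{s_1^{(i-1)}}x$ need not meet the tangent at all, and the point $w$ generally does not exist. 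Even if $w$ happened to exist, the step ruling out the edge $w\pp 0{r_0}$ relied on the new temporary line having already crossed the tangent between $\pp 1{s_1^{(i-1)}}$ and $v$, which again used that those two points are on opposite sides of the tangent --- false here.

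The paper circumvents this by replacing $\mathcal C$ with a possibly unbounded region $\mathcal A$ lying \emph{outside} $\CH(P_0)$, bounded by the arc of $\CH(P_0)$ from $\pp 0{r_0}$ to $x$ together with two half-lines $\ell_0$ and $\ell_1$ emanating from $\pp 0{r_0}$ and $x$ along the tangent and the old temporary line, respectively, each directed away from $P_1$. One then argues that the new temporary line, followed from $\pp 1{s_1^{(i-1)}}$ toward $v$, would have to enter $\mathcal A$ yet cannot cross any of its three boundary pieces; crossing $\ell_0$ is ruled out precisely \emph{because} $\pp 1{s_1^{(i-1)}}$ and $v$ are on the same side of the tangent. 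So the ``same side'' fact that breaks your $\mathcal C$ construction is exactly what makes the $\mathcal A$ construction work. A smaller slip: $v$ should be the point where the new temporary line \emph{enters} $\CH(P_0)$, not where it exits, since $c_0$ is defined relative to the entry point $y$.
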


\begin{proof}
Assume $k=0$ and the induction hypothesis $s_0^{(i-1)}\leq c_0^{(i-1)}\leq r_0$.
The inequality $s_0^{(i)}\leq c_0^{(i)}$ can be proven exactly as in the proof of Lemma
\ref{mainLemma}. Therefore, consider the inequality $c_0^{(i)}\leq r_0$ and assume that
an update happens in iteration $i$.

Let the \emph{old temporary line}
and the \emph{new temporary line} be the lines defined by the indices
$(s_0^{(i-1)},s_1^{(i-1)})$ and $(s_0^{(i)},s_1^{(i)})$, respectively.
Let $y$ and $x$ be the points where the old temporary line enters and exits
$\CH(P_0)$ followed from $\pp 1{s_1^{(i-1)}}$, respectively,
and let $v$ be the point where the new temporary line enters
$\CH(P_0)$. The points $y$ and $v$ exist since the convex hulls of $P_0$ and
$P_1$ are disjoint.

\begin{figure}
\centering
\begin{tikzpicture}[scale=0.2]

\input{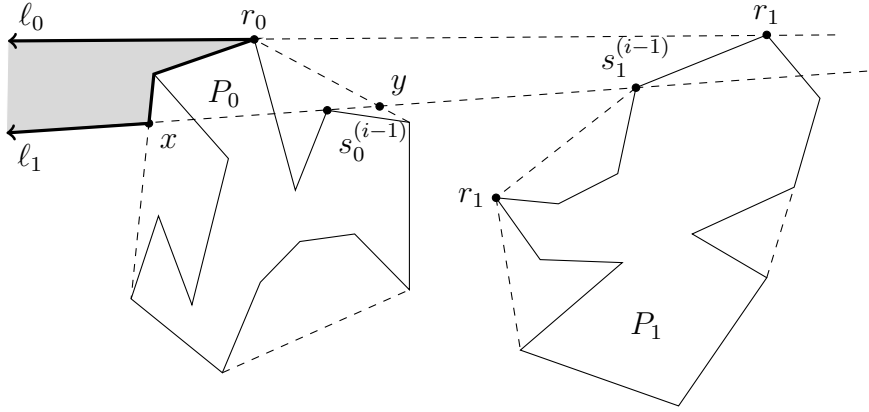}

\draw [dashed] (Z) -- (C);
\draw [dashed] (D) -- (H);
\draw [dashed] (I) -- (Y);

\draw [dashed] (W) -- (M);
\draw [dashed] (O) -- (Q);
\draw [dashed] (T) -- (W);


\filldraw [black] (C) circle (7pt);
\node at (C) [above] {$r_0$};

\filldraw [black] (S) circle (7pt);
\node at (S) [above] {$r_1$};

\filldraw [black] (W) circle (7pt);
\node at (W) [left] {$r_1$};

\filldraw [black] (A) circle (7pt);
\node at (A) [anchor = north west] {$s_0^{(i-1)}$};

\filldraw [black] (T) circle (7pt);
\node at (T) [above] {$s_1^{(i-1)}$};

\draw (intersection of A--T and D--H) coordinate (XX);
\filldraw [black] (XX) circle (7pt);
\node at (XX) [anchor = north west] {$x$};

\draw (intersection of A--T and C--Z) coordinate (YY);
\filldraw [black] (YY) circle (7pt);
\node at (YY) [anchor = south west] {$y$};

\draw [white, fill=gray!30] (-7.0, 40.9) coordinate (LL0) --
(C) -- (D) -- (XX) -- (-7.1, 34.8) coordinate (LL1) -- cycle;

\draw [<->, very thick] (LL0) --
(C) -- (D) -- (XX) -- (LL1);

\draw [dashed] (47.4, 41.3) -- (LL0);
\draw [dashed] (49.5, 38.9) -- (LL1);

\node at (LL0) [anchor = south west] {$\ell_0$};
\node at (LL1) [anchor = north west] {$\ell_1$};

\end{tikzpicture}
\caption{The area $\mathcal A$
from the proof of Lemma \ref{mainLemma2} in grey. The relevant corners are
marked and labeled with their indices.}
\label{outTanArea}
\end{figure}

Assume first that
the variable $u$ in the algorithm equals $0$ when the update happens. We prove that $v$ is in the part of $\CH(P_0)$ from
$y$ to $\pp 0{r_0}$ following the orientation of $P_0$, which is counterclockwise.
The point $\pp 0{s_0^{(i)}}$ is in the simple polygon $Q$ bounded the part of $\CH(P_0)$ from
$y$ counterclockwise to $x$ and the segment $xy$. Therefore, the new temporary line must enter $Q$
to get to $\pp 0{s_0^{(i)}}$. It cannot enter through $xy$, since
the old and new temporary line cross at $\pp 1{s_1^{(i-1)}}$ which is not in $\CH(P_k)$ by assumption.
Therefore, it must enter through the part of $\CH(P_0)$ from $y$ to $x$, so $v$ is in this part.
If $r_0$ is not in the part of $\CH(P_0)$ from $y$ to $x$, it is clearly true that
$v$ is in the part from $y$ to $\pp 0{r_0}$.
Otherwise, assume for contradiction that the points appear on $\CH(P_0)$ in the order
$y,\pp 0{r_0},v,x$ and $\pp 0{r_0}\neq v\neq x$.
Let $\ell_0$ be the half-line starting at $\pp 0{r_0}$ following the tangent away from
$\pp 1{r_1}$, and let $\ell_1$ be the half-line starting at $x$
following the old temporary line away from $\pp 1{s_1^{(i-1)}}$.
The part of $\CH(P_0)$ from $\pp 0{r_0}$ to $x$ and the half-lines
$\ell_0$ and $\ell_1$ define a possibly unbounded area $\mathcal A$ outside $\CH(P_0)$,
see Figure \ref{outTanArea}.
We follow the new temporary line from $\pp 1{s_1^{(i-1)}}$ towards $v$.
The point $\pp 1{s_1^{(i-1)}}$ is not in $\mathcal A$ and
the new temporary line exits $\mathcal A$ at $v$ since it enters $\CH(P_0)$ at $v$,
so it must enter $\mathcal A$ somewhere
at a point on the segment $\pp 1{s_1^{(i-1)}}v$.
It cannot enter through $\CH(P_0)$ since $\CH(P_0)$ is convex.
It cannot enter through $\ell_0$ since
$v$ and $\pp 1{s_1^{(i-1)}}$ are on the same side of the outer common tangent.
It cannot enter through $\ell_1$ since the old and new temporary line
intersect in $\pp 1{s_1^{(i-1)}}$, which is not in $\mathcal A$. That is a contradiction,
so $v$ is on the part of $\CH(P_0)$ from $y$ to $\pp 0{r_0}$.
Hence, the first corner after $y$ is coincident with or before $\pp 0{r_1}$, i.e.,
$c_0^{(i)}\leq r_0$.

Assume now that $u=1$ in the beginning of iteration $i$ so that a corner of the polygon $P_1$ is traversed.
Observation \ref{rotation2} gives that $v$ is on the part of $\CH(P_0)$ from $y$ counterclockwise to $x$.
It follows that $v$ appears before $\pp 0{r_0}$ on
$\CH(P_0)$ counterclockwise from $y$ from exactly the same arguments as in the case $u=0$.
\end{proof}

\begin{lemma}\label{update2}
If $\pp 0{s_0}\neq\pp 0{r_0}$ or $\pp 1{s_1}\neq\pp 1{r_1}$, 
then $\TTT(\pp 0{s_0},\pp 1{s_1},\pp k{t})=1$ for some
$k=0,1$ and some index $t\in\{s_k+1,\ldots,r_k\}$.
\end{lemma}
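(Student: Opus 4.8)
The plan is to prove the contrapositive: assuming that $\TTT(\pp 0{s_0},\pp 1{s_1},\pp kt)\le 0$ for both $k=0,1$ and every $t\in\{s_k+1,\dots,r_k\}$, I will show that $\pp 0{s_0}=\pp 0{r_0}$ and $\pp 1{s_1}=\pp 1{r_1}$, which is exactly the negation of the lemma's hypothesis. The heart of the argument is a localisation claim treating each polygon separately, after which a short geometric argument about outer common tangents finishes things off.

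The localisation claim is: if the index set $\{s_k+1,\dots,r_k\}$ is nonempty, then the standing assumption forces $\TTT(\pp 0{s_0},\pp 1{s_1},\pp kt)\le 0$ for every corner $\pp kt$ of $P_k$. By Observation~\ref{kalotten2} it is enough to bound the single corner $\pp k{d_k}$, where $d_k$ is the index of the $\CH(P_k)$-vertex strictly after the entry point $y$. Using Lemma~\ref{mainLemma2} ($s_k\le c_k\le r_k$), the fact (noted before Lemma~\ref{mainLemma2}) that $\TTT(\pp 0{s_0},\pp 1{s_1},\pp k{c_k})\ge 0$ with equality only if $\pp k{c_k}=\pp k{s_k}=y$, and the bound $c_k^{(0)}\le r_k<c_k^{(0)}+n_k$, I would split into cases according to whether $y$ is a $\CH(P_k)$-vertex and whether $\pp k{c_k}=\pp k{s_k}$, and in each case exhibit a witness corner --- either $\pp k{c_k}$ itself or $\pp k{d_k}$ --- whose index lies in $\{s_k+1,\dots,r_k\}$ modulo $n_k$; the assumption then rules out this witness, and Observation~\ref{kalotten2} gives the claim. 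Since $\LLL(\pp 0{s_0},\pp 1{s_1})$ passes through the corners $\pp 0{s_0}$ and $\pp 1{s_1}$ and so meets both $\CH(P_0)$ and $\CH(P_1)$, the claim says precisely that this line is a tangent of $\CH(P_k)$ --- hence of $P_k$ by Lemma~\ref{folklore} --- with $P_k\subseteq\RHP(\pp 0{s_0},\pp 1{s_1})$; call $P_k$ \emph{settled} when this holds. If instead $\{s_k+1,\dots,r_k\}$ is empty, then $s_k\ge r_k$, so $s_k=r_k$ by Lemma~\ref{mainLemma2} and $\pp k{s_k}=\pp k{r_k}$ already.

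If both index sets are nonempty, both $P_0$ and $P_1$ are settled, so $\LLL(\pp 0{s_0},\pp 1{s_1})$ is a common tangent with $P_0$ and $P_1$ on the same side, i.e. an outer common tangent. By Lemma~\ref{folklore} there are exactly two of these, and walking around the boundary of $\CH(\CH(P_0)\cup\CH(P_1))$ shows that when each is directed from its contact point on $\CH(P_0)$ to its contact point on $\CH(P_1)$, one has both polygons on its right and the other has them on its left; hence $\LLL(\pp 0{s_0},\pp 1{s_1})=\LLL(\pp 0{r_0},\pp 1{r_1})$. As $\pp 0{s_0}$ is a corner of $P_0$ on this line, general position forces $\pp 0{s_0}\in\{\pp 0{r_0},\pp 1{r_1}\}$, and since the polygons share no corner, $\pp 0{s_0}=\pp 0{r_0}$; symmetrically $\pp 1{s_1}=\pp 1{r_1}$. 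In the remaining case, say $\{s_0+1,\dots,r_0\}$ is empty, so $\pp 0{s_0}=\pp 0{r_0}$; if $\{s_1+1,\dots,r_1\}$ is empty too we are done, and otherwise $P_1$ is settled, so $\LLL(\pp 0{r_0},\pp 1{s_1})$ is a tangent of $\CH(P_1)$ with $\CH(P_1)\subseteq\RHP(\pp 0{r_0},\pp 1{s_1})$. Since the convex hulls are disjoint, $\pp 0{r_0}$ lies outside $\CH(P_1)$, and both $\LLL(\pp 0{r_0},\pp 1{s_1})$ and the outer common tangent $\LLL(\pp 0{r_0},\pp 1{r_1})$ are tangent lines of $\CH(P_1)$ through $\pp 0{r_0}$ with $\CH(P_1)$ to their right when directed away from $\pp 0{r_0}$; as there are only two tangent lines from an external point to a convex polygon and they keep the polygon on opposite sides, these two lines coincide, and as before general position yields $\pp 1{s_1}=\pp 1{r_1}$.

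The step I expect to be the real work is the localisation claim: the underlying geometry (via Observation~\ref{kalotten2}) is clear, but the index bookkeeping is fiddly near the degenerate configuration $c_k=r_k$ --- where the ``next vertex'' $\pp k{d_k}$ can literally have index larger than $r_k$ while still representing a corner with an index in $\{s_k+1,\dots,r_k\}$ modulo $n_k$ --- and in the borderline case $\pp k{c_k}=\pp k{s_k}=y$. The two auxiliary geometric facts (which of the two outer common tangents keeps both polygons in $\RHP(\pp 0{s_0},\pp 1{s_1})$, and that two same-sided tangent lines from a common external point coincide) are elementary but need to be stated carefully.
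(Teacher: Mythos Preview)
Your contrapositive strategy is sound and does lead to a correct proof, but it is a genuinely different argument from the paper's. The paper does \emph{not} argue via uniqueness of the outer common tangent. Instead it gives a short topological separation argument: assuming (toward a contradiction) that $\TTT(\pp 0{s_0},\pp 1{s_1},\pp k{r_k})\le 0$ for both $k$ and that the entire polygonal path $P_0[s_0,r_0]$ lies in $\RHP(\pp 0{s_0},\pp 1{s_1})$, that path is shown to separate $\pp 1{s_1}$ from $\pp 1{r_1}$ inside the region $W=\RHP(\pp 0{s_0},\pp 1{s_1})\cap\RHP(\pp 0{r_0},\pp 1{r_1})$. Since the path $P_1[s_1,r_1]$ can cross neither $P_0[s_0,r_0]$ (the polygons are disjoint) nor $\LLL(\pp 0{r_0},\pp 1{r_1})$ (it is a tangent of $P_1$), it must leave $W$ through $\LLL(\pp 0{s_0},\pp 1{s_1})$, producing the required witness $t$ in $P_1$. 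That is the whole proof.

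The trade-offs are clear. The paper's route is very short and needs no case analysis, but it leans on a Jordan-type separation statement that is stated rather than spelled out. Your route is longer and, as you anticipate, the localisation claim does require careful index bookkeeping (the cases $y=\pp k{c_k}=\pp k{s_k}$ and $c_k=r_k$ are exactly the delicate ones, though both resolve as you sketch, using that $\pp k{r_k}$ is a convex-hull vertex so the next hull vertex $d_k$ after $c_k$ satisfies $d_k\le r_k$ whenever $c_k<r_k$). On the other hand, your argument stays closer to the tangent machinery already developed (Observation~\ref{kalotten2}, Lemma~\ref{mainLemma2}) and makes explicit why the conclusion is really a uniqueness statement for the tangent. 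Either approach is acceptable; if you want brevity, the separation argument wins.
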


\begin{proof}
Assume that $\TTT(\pp 0{s_0},\pp 1{s_1},\pp k{r_k})\leq 0$
for $k=0,1$, since otherwise, we are done.
Likewise, assume that all of the part
$P_0[s_0,r_0]$ of
$P_0$ from $\pp 0{s_0}$ to $\pp 0{r_0}$
is in
$\RHP(\pp 0{s_0},\pp 1{s_1})$. The part
$P_0[s_0,r_0]$ separates $\pp 1{s_1}$ from
$\pp 1{r_1}$ in the set
$W=\RHP(\pp 0{s_0},\pp 1{s_1})\cap\RHP(\pp 0{r_0},\pp 1{r_1})$.
Since the part
$P_1[s_1,r_1]$ of $P_1$ from $\pp 1{s_1}$ to $\pp 1{r_1}$ cannot cross
$P_0[s_0,r_0]$
or $\LLL(\pp 0{r_0},\pp 1{r_1})$, it must exit
and enter
$W$ through points on $\LLL(\pp 0{s_0},\pp 1{s_1})$ when
followed from $\pp 1{s_1}$, and hence the claim is true.
\end{proof}

We can now prove the stated properties of Algorithm \ref{outTanAlg}.

\begin{theorem}
If the polygons $P_0$ and $P_1$ have disjoint convex hulls, Algorithm \ref{outTanAlg}
returns a pair of indices $(s_0,s_1)$ defining an outer common tangent
such that $P_0$ and $P_1$ are contained in
$\RHP(s_0,s_1)$. The algorithm runs in linear time and uses
constant workspace.
\end{theorem}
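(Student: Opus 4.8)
The plan is to mirror the structure of the proof of Theorem~\ref{mainThm}, since the outer-tangent algorithm is structurally identical to the separating-tangent one. I would split the argument into three parts: correctness of the returned pair, correctness when the algorithm returns, and the running-time bound. The main tools are Lemma~\ref{mainLemma2} (which plays the role that Lemma~\ref{mainLemma} played for separating tangents) and Lemma~\ref{update2} (which has no analogue in the separating case and is precisely what guarantees progress).

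First I would argue that whenever the algorithm returns a pair $(s_0,s_1)$, this pair defines the outer common tangent. The loop condition forces $t_0 \ge 2n_0$ and $t_1 \ge 2n_1$ at termination, and by Lemma~\ref{mainLemma2} we have $s_k^{(i)} \le r_k < 2n_k$ throughout, so $s_k$ is never updated to a value as large as $2n_k$. Hence for each $k$ and each $t \in \{2n_k,\ldots,3n_k-1\}$ — equivalently, for every corner of $P_k$ — we have $\TTT(\pp 0{s_0},\pp 1{s_1},\pp k t)\neq 1$, i.e.\ $\pp k t \in \RHP(\pp 0{s_0},\pp 1{s_1})$, so the line $\LLL(\pp 0{s_0},\pp 1{s_1})$ has both $P_0$ and $P_1$ on its right. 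By the general position assumption this line actually touches both polygons, so it is a common tangent, and with both polygons on the same side it is an outer common tangent. (Strictly, one also notes that since $s_k \le r_k < 2n_k$, the final $t_k$ reaches the full range $2n_k,\dots,3n_k-1$ only after $s_k$ has stabilized, which is exactly what the reset of $t_{1-u}$ to $s_{1-u}+1$ ensures.)

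Next I would show that the algorithm does return (rather than looping forever or doing something undefined): the loop is clearly finite because each iteration increases $t_0+t_1$ by one, and the only way to not return $(s_0,s_1)$ would be... there is no other return path in Algorithm~\ref{outTanAlg}, so termination is automatic. The real content is that the returned pair equals $(r_0,r_1)$ up to the modular reductions, and this is where Lemma~\ref{update2} enters: if at termination we had $\pp 0{s_0}\neq \pp 0{r_0}$ or $\pp 1{s_1}\neq \pp 1{r_1}$, then some corner $\pp k t$ with $t \in \{s_k+1,\ldots,r_k\}$ lies strictly to the left of the temporary line; since $r_k < 2n_k$ and the $t_k$ counters sweep through $\{1,\ldots,2n_k-1\}$ in order while $s_k$ only grows, that corner is visited at an iteration when it still lies to the left (here one needs that $s_k$ does not overshoot it, which is $s_k^{(i)}\le c_k^{(i)}\le r_k$ from Lemma~\ref{mainLemma2}), forcing a further update — contradicting termination without the equality. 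Hence the returned pair defines the outer common tangent with the claimed side condition.

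Finally, the running-time and workspace bounds go through verbatim as in Theorem~\ref{mainThm}: each update in iteration $i$ raises $s_0+s_1$ by at least $\tfrac{i-j}{2}$ where $j$ was the previous update iteration, so the last update happens within $2(s_0+s_1)\le 2(r_0+r_1) \le 4(n_0+n_1)$ iterations, after which at most $2n_0-s_0 + 2n_1 - s_1$ more iterations occur, for a total of $O(n_0+n_1)$; the algorithm stores only a constant number of indices into the read-only input arrays, so it uses $O(\log(n_0+n_1))$ bits of workspace. The step I expect to be the main obstacle is the second one — carefully arguing that the $t_k$ counter actually visits the offending corner of Lemma~\ref{update2} at a moment when $s_k$ has not yet passed it — since this is the one place the reset $t_{1-u}\gets s_{1-u}+1$ and the monotonicity bounds of Lemma~\ref{mainLemma2} must be combined just so; everything else is bookkeeping parallel to the separating-tangent proof.
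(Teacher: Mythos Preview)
Your overall approach---combining Lemma~\ref{mainLemma2} with Lemma~\ref{update2}---is exactly the paper's, and your second and third paragraphs capture the argument. But there are two genuine slips.

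First, your opening paragraph transplants the ``last full pass'' argument from Theorem~\ref{mainThm} without adjusting for the different loop bound. In Algorithm~\ref{outTanAlg} the while-loop runs only while $t_k<2n_k$, so the indices $\{2n_k,\ldots,3n_k-1\}$ are \emph{never} visited; your claim that the test has been applied to every $t$ in that range is vacuous. More to the point, the direct argument cannot be repaired: after the final update one has $t_k=s_k+1$, and since Lemma~\ref{mainLemma2} only gives $s_k\le r_k<2n_k$, the remaining sweep $\{s_k+1,\ldots,2n_k-1\}$ need not cover a full period of $n_k$ corners. This is precisely why the paper introduces Lemma~\ref{update2} for the outer case: one cannot conclude directly that every corner has been tested against the terminal line, only that every corner with index in $\{s_k+1,\ldots,r_k\}$ has---and Lemma~\ref{update2} guarantees that is enough. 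So drop the first paragraph entirely; it is not a shortcut but a wrong turn.

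Second, the sentence ``the loop is clearly finite because each iteration increases $t_0+t_1$ by one'' is false: when an update occurs, $t_{1-u}$ is reset to $s_{1-u}+1$, which can make $t_0+t_1$ drop. Termination is not automatic from this; it comes from the amortized counting you do in the last paragraph (the paper's bound is $2(s_0+s_1)+(2n_0-s_0)+(2n_1-s_1)\le 4(n_0+n_1)$ iterations). Once you remove the flawed direct argument and the false monotonicity claim, what remains---Lemma~\ref{update2} forces a further update before $t_k$ passes $r_k$, and Lemma~\ref{mainLemma2} ensures $t_k$ does pass $r_k$ before termination---is the paper's proof.
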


\begin{proof}
Assume that the pair $(s_0,s_1)$ does not define the outer common tangent.
By Lemma \ref{update2}, an update of $s_0$ or $s_1$ happens when
$\pp 0{r_0}$ or $\pp 1{r_1}$ is traversed or before. By Lemma \ref{mainLemma2},
the algorithm does not terminate before
$\pp 0{r_0}$ and $\pp 1{r_1}$ has been traversed. Hence, when the algorithm
terminates, $(s_0,s_1)$ defines the outer common tangent.

Like in the proof of Theorem \ref{mainThm}, we see
inductively that when the final update of $s_0$ and $s_1$ happens,
there has been at most $2(s_0+s_1)$ iterations.
After that, at most $2n_0-s_0+2n_1-s_1$ iterations follow. Hence,
the algorithm terminates after at most $4n_0+4n_1$ iterations.
\end{proof}

\section{Concluding Remarks}

\begin{figure}
\centering
\begin{tikzpicture}[scale=0.5]

\draw (3.7,10.5) coordinate (A)
-- (3.7, 3.7) coordinate (B)
-- (12.1, 3.9) coordinate (C)
-- (12.0, 5.0) coordinate (D)
-- (11.0, 5.7) coordinate (E)
-- (5.6, 5.2) coordinate (F)
-- cycle;

\draw (6.7, 7.8) coordinate (G)
-- (6.0, 7.3) coordinate (H)
-- (5.2, 7.3) coordinate (I)
-- (7.5, 11.2) coordinate (J)
-- (7.5, 5.9) coordinate (K)
-- cycle;

\draw [dashed] (2.72, 10.32) -- (8.41, 11.39);

\node at (4.6, 4.7) {$P_0$};
\node at (8, 9.3) {$P_1$};

\filldraw [black] (A) circle (3pt);
\node at (A) [above] {$s_0^{(0)}$};
\filldraw [black] (G) circle (3pt);
\node at (G) [above] {$s_1^{(0)}$};

\end{tikzpicture}
\caption{Two polygons $P_0$ and $P_1$ where Algorithm \ref{outTanAlg} does not
work for the initial values of $s_0$ and $s_1$ as shown. The correct tangent is drawn as
a dashed line.}
\label{infiniteLoop}
\end{figure}
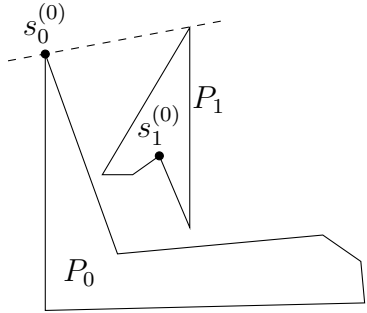

We have described an algorithm for computing the separating common tangents of two
simple polygons in linear time using constant workspace.
We have also described an algorithm for computing outer common
tangents using linear time and constant workspace when the convex hulls of the polygons are disjoint.
Figure \ref{infiniteLoop}
shows an example where Algorithm \ref{outTanAlg} does not work when applied to
two disjoint polygons with overlapping convex hulls. In fact,
if there was no bound on the values $t_0$ and $t_1$ in the loop at line \ref{while2},
the algorithm would update $s_0$ and $s_1$ infinitely often and never find the correct
tangent.
An obvious improvement
is to find an equally fast and space efficient algorithm which does not require the convex
hulls to be disjoint.
An algorithm for computing an outer common tangent of two polygons, when such one exists, also
decides if one convex hull is completely contained in the other. Together with the algorithm for
separating common tangents presented in Section \ref{sepTan}, we would have an optimal
algorithm for deciding the complete relationship between the convex hulls:
if one is contained in the other, and if not, whether they are disjoint or not.
However, keeping in mind that it is harder to compute an outer common tangent of intersecting
convex polygons than of disjoint ones \cite{guibas1991}, it would not be surprising if it was also
harder to compute an outer common tangent of general simple polygons than
simple polygons with disjoint convex hulls when only constant workspace is available.





\section*{Acknowledgments}

We would like to thank Mathias Tejs B\ae{}k Knudsen for pointing out the error
in the algorithm for separating common tangents in the preliminary version of the paper
\cite{abrahamsen}.

\bibliography{proj}{}
\bibliographystyle{plain}

\end{document}